\definecolor{newblue}{rgb}{0.2,0.2,0.6} % Define the color BrickRed
\newcommand{\rot}{\intercal}
\newcommand{\calL}{\mathcal{L}}
\newcommand{\eps}{\varepsilon}
\newcommand{\vol}{\mu}
\newcommand{\OPT}{\mathsf{OPT}}
\newcommand{\SOL}{\mathsf{SOL}}
\newcommand{\thmref}[1]{Theorem~\ref{thm:#1}}
\newcommand{\lemref}[1]{Lemma~\ref{lem:#1}}
\newcommand{\figref}[1]{Figure~\ref{fig:#1}}
\newcommand{\eq}[1]{\eqref{eq:#1}}
\renewcommand{\tilde}{\widetilde}
\renewcommand{\epsilon}{\varepsilon}
\newcommand{\mat}[1]{#1}
\newtheorem{thm}{Theorem}[section]  %[chapter]
\newtheorem{lem}[thm]{Lemma}
\newtheorem*{rem*}{Remark}
\newcommand{\ONE}{{$\mathsf{ALLONE}_s$}}
\newcommand{\DISJN}{{$\mathsf{DISJ}_{s,n}$}}
\newcommand{\DISJL}{{$\mathsf{DISJ}_{s,\ell}$}}
\title{\textbf{Communication-Optimal Distributed Clustering}\footnote{A preliminary version of this paper appears at the 30th
Annual Conference on Neural Information Processing Systems (NIPS), 2016.}}
\author{
  Jiecao Chen\footnote{Department of Computer Science, Indiana University, Bloomington,  USA.  Work supported in part by NSF
CCF-1525024 and IIS-1633215.
  Email: \texttt{jiecchen@indiana.edu}}
 \and
  He Sun\footnote{Department of Computer Science, University of Bristol, Bristol,  UK.
 \texttt{h.sun@bristol.ac.uk}} 
\and
   David P. Woodruff\footnote{IBM Research Almaden, San Jose,  USA. \texttt{dpwoodru@us.ibm.com}}
   \and
   Qin Zhang\footnote{Department of Computer Science, Indiana University, Bloomington, USA. Work supported in part by NSF
CCF-1525024 and IIS-1633215.
 Email: \texttt{qzhangcs@indiana.edu}}
}
\date{}
\begin{document}
% \nipsfinalcopy is no longer used

\maketitle

\begin{abstract}
  %  Clustering large datasets is a fundamental problem with a number of applications in machine learning. Data is often collected on different sites and clustering needs to be performed in a distributed manner with low communication. We would like the quality of the clustering in the distributed setting to match that in the centralized setting for which all the data resides on a single site. In this work, we study both graph and geometric clustering problems in two distributed models: (1) a point-to-point model, and (2) a model with a broadcast channel. We give protocols in both models which we show are nearly optimal by proving almost matching communication lower bounds. Our work highlights the surprising power of a broadcast channel for clustering problems; roughly speaking, to  cluster $n$ points or $n$ vertices in a graph distributed across $s$ servers, for a worst-case partitioning the communication complexity in a point-to-point model is $n \cdot s$, while in the broadcast model it is $n + s$. Similar phenomenon holds for the geometric setting as well. We implement our algorithms and demonstrate this phenomenon on real life datasets, showing that our algorithms are also very efficient in practice.
Clustering large datasets is a fundamental problem with a number of applications in machine learning. Data is often collected on different sites and clustering needs to be performed in a distributed manner with low communication. We would like the quality of the clustering in the distributed setting to match that in the centralized setting for which all the data resides on a single site. In this work, we study both graph and geometric clustering problems in two distributed models: (1) a point-to-point model, and (2) a model with a broadcast channel. We give protocols in both models which we show are nearly optimal by proving almost matching communication lower bounds. Our work highlights the surprising power of a broadcast channel for clustering problems; roughly speaking, to spectrally cluster $n$ points or $n$ vertices in a graph distributed across $s$ servers, for a worst-case partitioning the communication complexity in a point-to-point model is $n \cdot s$, while in the broadcast model it is $n + s$. A similar phenomenon holds for the geometric setting as well. We implement our algorithms and demonstrate this phenomenon on real life datasets, showing that our algorithms are also very efficient in practice. 
\end{abstract}

\thispagestyle{empty}

\setcounter{page}{0}

\newpage

\tableofcontents

\thispagestyle{empty}

\setcounter{page}{0}

\newpage

\section{Introduction}\label{sec:intro}

Clustering is a fundamental task in machine learning with widespread applications in data mining, computer vision, and social network analysis. Example applications of clustering include grouping similar webpages by search engines,  finding users with common interests in a social network, and identifying different objects in a picture or video. For these applications, one can model the objects that need to be clustered as points in Euclidean space $\mathbb{R}^d$, where the similarities of two objects are represented by the Euclidean distance between the two points. 
Then the task of clustering is to choose $k$ points as centers, so that the total distance between all input points to their corresponding closest center is minimized. Depending on different distance objective functions, three typical problems have been studied: $k$-means, $k$-median, and $k$-center.

The other popular approach for clustering is to model the input data as vertices of a graph, and the similarity between two objects is represented by the weight of the edge connecting the corresponding vertices. For this scenario, one is asked to partition the vertices into clusters so that the ``highly connected'' vertices belong to the same cluster. A widely-used approach for graph clustering  is \emph{spectral clustering}, which embeds the vertices of a graph into the points in $\mathbb{R}^k$ through the bottom $k$ eigenvectors of the graph's Laplacian matrix, and applies $k$-means on the embedded points.

%For instance, one typical scenario for clustering is to group the points in the geometric space. Here, one is given $n$ points in the Euclidean space, and asked to choose $k$ points so that the total distance between every input point to its closest center is minimized. Depending on different distance functions, one can define the $k$-means problem, $k$-median problem, and the $k$-center problem.

%Another typical example for clustering is to group the vertices in a graph, where the edge wight between any pair of vertices represents the similarity of two vertices, and one is asked to partition the vertices so that highly-connected vertices belong to the same cluster. A widely used algorithm for graph clustering in practice is the spectral clustering: the algorithm embeds vertices to points in a high-dimensional Euclidean space through the bottom eigenvectors of the Laplacian matrix associated with the graph, and applies $k$-means algorithm on the embedded points.

Both the spectral clustering and the geometric clustering algorithms mentioned above have been widely used in practice, and have been the subject of extensive theoretical and experimental studies over the decades. However, these algorithms are designed for the centralized setting, and are not applicable in the setting of large-scale datasets that are maintained remotely by different sites.
In particular, collecting the information from all the remote sites and performing a centralized clustering algorithm is infeasible due to high communication costs, and new distributed clustering algorithms with low communication cost need to be developed.

There are several natural communication models, and we focus on two of them: (1) a point-to-point model, and (2) a model with a broadcast channel. In the former, sometimes referred to as the {\it message-passing model}, there is a communication channel between each pair of users. This may be impractical, and the so-called {\it coordinator model} can often be used in place; in the coordinator model there is a centralized site called the coordinator, and all communication goes through the coordinator. This affects the total communication by a factor of two, since the coordinator can forward a message from one server to another and therefore simulate a point-to-point protocol. There is also an additional additive $O(\log s)$ bits per message, where $s$ is the number of sites, since a server must specify to the coordinator where to forward its message. In the model with a broadcast channel, sometimes referred to as the {\it blackboard model}, the coordinator has the power to send a single message which is received by all $s$ sites at once. This
can be viewed as a model for single-hop wireless networks.

In both models we study the total number of bits
communicated among all sites. Although the blackboard model is at least as powerful as the
message-passing model, it is often unclear how to exploit its power to obtain better bounds for specific problems.
Also, for a number of problems the communication complexity is the same in both models, such as computing
the sum of $s$ length-$n$ bit vectors modulo two, where each site holds one bit vector \cite{pvz16}, or estimating large moments \cite{wz12}. Still,
for other problems like set disjointness it can save a factor of $s$ in the communication
\cite{beopv13}.
%Understanding for which problems the broadcast channel can help is an important goal.  
%Both the spectral algorithms and algorithms for clustering points in the Euclidean space are centralized, and  require the global knowledge of the dataset. However, as most of the nowadays' big datasets are maintained in a distributed way by multiple sites, it is unrealistic for one site to gather the information of the whole datasets and perform a centralized algorithm. Hence, we need to develop distributed clustering algorithms, for which the overall communication cost among all sites needs to be considered. 

\subsection{Our contributions}  
%We study both graph clustering and geometric clustering problems in the two distributed models above. 
%
We present algorithms for graph clustering: for any $n$-vertex graph whose edges are arbitrarily partitioned across $s$ sites, our algorithms have communication cost $\tilde{O}(ns)$ in the message passing model, and have communication cost $\tilde{O}(n+s)$ in the blackboard model, where the $\tilde{O}$ notation suppresses polylogarithmic factors. The algorithm in the message passing model has each site send a {\it spectral sparsifier} of its local data to the coordinator, who then merges them in order to obtain a spectral sparsifier of the union of the datasets, which is sufficient for solving the graph clustering problem. Our algorithm in the blackboard model is technically more involved, as we show a particular recursive sampling procedure for building a spectral sparsifier can be efficiently implemented using a broadcast channel. It is unclear if other natural ways of building spectral sparsifiers can be implemented with low communication in the blackboard model. Our algorithms demonstrate the surprising power of the blackboard model for clustering problems. Since our algorithms compute spectral sparsifiers, they also have applications to solving symmetric diagonally dominant linear systems in a distributed model. Any such system can be converted into a system involving a Laplacian (see, e.g., \cite{ACKQWZ16}), from which a spectral sparsifier serves as a good preconditioner. 

Next we show that $\Omega(ns)$ bits of communication is necessary in the message passing model to even recover a constant fraction of a cluster, and $\Omega(n+s)$ bits of communication is necessary in the blackboard model. This shows the optimality of our algorithms up to poly-logarithmic factors. 

We then study clustering problems in constant-dimensional Euclidean space. We show for any $c>1$, computing a $c$-approximation for $k$-median, $k$-means, or $k$-center correctly with constant probability in the message passing model requires $\Omega(sk)$ bits of communication. We then strengthen this lower bound, and show even for {\it bicriteria} clustering algorithms, which may output a constant factor more clusters and a constant factor approximation, our $\Omega(sk)$ bit lower bound still holds. Our proofs are based on
communication and information complexity. 
Our results imply that existing algorithms~\cite{BEL13} for $k$-median and $k$-means with $\tilde{O}(sk)$ bits of communication, as well as the folklore parallel guessing algorithm for $k$-center with $\tilde{O}(sk)$ bits of communication, are optimal up to poly-logarithmic factors. For the blackboard model, we present an algorithm for $k$-median and $k$-means that achieves an $O(1)$-approximation using $\tilde{O}(s+k)$ bits of communication. This again separates the models.

We give empirical results which show that using spectral sparsifiers preserves the quality of spectral clustering surprisingly well in real-world datasets.
%In the message passing model and the blackboard model, we show that our algorithms constructing spectral sparsifiers save a huge amount of communication while achieving similar results compared to a centralized algorithm.
For example, when we partition a graph with over $70$ million edges (the {\tt Sculpture} dataset) into $30$ sites, only $6\%$ of the input edges are communicated in the blackboard model and $8\%$ are communicated in the message passing model, while the values of the normalized cut (the objective function of spectral clustering) given in those two models are at most $2\%$ larger than the ones given by the centralized algorithm, and the visualized results are almost identical. This is strong evidence that spectral sparsifiers can be a powerful tool in practical, distributed computation. When the number of sites is large, the blackboard model incurs significantly less communication than the message passing model, e.g., in the {\tt Twomoons} dataset when there are $90$ sites, the message passing model communicates $9$ times as many edges as communicated in the blackboard model, illustrating the strong separation between these models that our theory predicts.

%We give empirical results which show that using spectral sparsifiers preserves the quality of spectral clustering surprisingly well in real-world datasets. In the message passing model and the blackboard model, we show that our algorithms constructing spectral sparsifiers save a huge amount of communication while achieving similar results compared to a centralized algorithm. This is strong evidence that spectral sparsifiers can be a powerful tool in practical, distributed computation. When the number of sites is large, the blackboard model incurs significantly less communication than the message passing model, illustrating the strong separation between these models that our theory predicts. 
%Our experiments also show how algorithms in those models can be affected by different parameters.

\subsection{Related work}  There is a rich literature on spectral and geometric clustering algorithms from various aspects~(see, e.g.,~\cite{k-means++,nips02,PSZ15,luxburg07}). Balcan et al.~\cite{BEL13,BKLW14} and Feldman et al. \cite{FSS13} study distributed $k$-means (\cite{BEL13} also studies $k$-median), and present provable guarantees on the clustering quality. Very recently Guha et al. \cite{GLZ17} studied distributed $k$-median/center/means with outliers.  %Lammersen et al.~\cite{journals/mst/Lammersen0S15} study probabilistic geometric clustering problem in data streams.
Cohen et al.~\cite{journals/corr/CohenEMMP14} study dimensionality reduction techniques for the input data matrices that can be used for distributed $k$-means. The main takeaway  is that there is no previous work which develops protocols for spectral clustering in the common message passing and blackboard models, and lower bounds are lacking as well. For geometric clustering, while upper bounds exist (e.g.,~\cite{BEL13,BKLW14,FSS13}), no provable lower bounds in either model existed, and our main contribution is to show that previous algorithms are optimal. We also develop a new protocol in the blackboard model.  
%
%
%\textcolor{red}{Most of these previous studies only considered the clustering problem in the non-distributed setting, while a few reference~(e.g.,~\cite{BEL13}) presented distributed geometric clustering algorithms but did not optimize the communication complexity.}
%
%
%
%Czumaj and Sohler~\cite{journals/mst/CzumajS10} studied sublinear-time approximation algorithms for the geometric clustering problem in metric spaces. 

%\textcolor{red}{add some related works. also add all the work by Balcan, Sohler, Cohen, Musco, etc.}

\section{Preliminaries\label{sec:preliminaries}}

Let 
 $G=(V,E,w)$ be an undirected   graph with $n$ vertices,   $m$ edges, and weight function $V\times V\rightarrow \mathbb{R}_{\geq 0}$. 
The set of neighbors of a vertex $v$ is represented by $N(v)$,
and its degree is $d_v=\sum_{u\sim v} w(u,v)$. The maximum degree of $G$ is defined to be $\Delta(G)=\max_{v}\{d_v\}$. 
For any set $S\subseteq V$, let $\mu(S)\triangleq\sum_{v\in S} d_v$.
For any sets $S, T\subseteq V$, we define
$w(S,T)\triangleq \sum_{u\in S, v\in T} w(u,v)$ to be the total weight of edges crossing $S$ and $T$.
We define the conductance of any set $S$ by
\[
\phi(S)=\frac{w(S, V\setminus S)}{\mu(S)}.
\]
For two sets $X$ and $Y$, the symmetric difference of $X$ and $Y$ is defined
as $X\triangle Y\triangleq (X\setminus Y)\cup (Y\setminus X)$. 
%For any two graphs $G_1=(V, E_1,w_1)$ and $G_2=(V, E_2,w_2)$ defined on the same vertex set, we write $G_1\boxplus G_2= (V,E,w)$ where $E=E_1\cup E_2$ and the weight function $w$ is defined by $w(u,v)=w_1(u,v)+w_2(u,v)$ for any $u,v\in V$.

For any matrix $A\in\mathbb{R}^{n\times n }$, let $\lambda_1(A)\leq\cdots \leq \lambda_n(A)=\lambda_{\max}(A)$ be the eigenvalues of $A$. For any two matrices $A, B\in\mathbb{R}^{n\times n}$, we write $A\preceq B$ to represent $B-A$ is positive semi-definite~(\textsf{PSD}). Notice that this condition implies that $x^{\rot}Ax\leq x^{\rot}Bx$ for any $x\in\mathbb{R}^n$. Sometimes we also use a weaker notation
$(1-\varepsilon)A\preceq_r B\preceq_r (1+\varepsilon)A$ to indicate that 
\[
(1-\varepsilon)x^{\rot}Ax\leq x^{\rot}Bx\leq (1+\varepsilon)x^{\rot}Ax
\] for all $x$ in the row span of $A$.

%\medskip

\subsection{Graph Laplacian}  The Laplacian matrix of $G$ is an $n\times n$ matrix $L_G$ defined by 
$L_G=D_G-A_G$, where $A_G$ is the adjacency matrix of $G$ defined by $A_G(u,v)=w(u,v)$, and $D_G$ is the $n\times n$ diagonal matrix with $D_G(v,v)=d_v$ for any $v\in V[G]$.  Alternatively, we can write $L_G$ with respect to a \emph{signed edge-vertex incidence matrix}:  we assign every edge $e=\{u,v\}$ an arbitrary orientation, and let $B_G(e,v)=1$ if $v$ is $e$'s head, $B_G(e,v)=-1$ if $v$ is $e$'s tail, and $B_G(e,v)=0$ otherwise.
We further define a diagonal matrix $W_G\in\mathbb{R}^{m\times m}$, where $W_G(e,e)=w_e$ for any edge $e\in E[G]$. Then, we can write $L_G$ as $L_G=B_G^{\rot}W_GB_G$. The \emph{normalized Laplacian matrix} of $G$ is defined by
$
\calL_G\triangleq \mat{D}_G^{-1/2}\mat{L}_G\mat{D}_G^{-1/2} = \mat{I}-\mat{D}_G^{-1/2}\mat{A}_G\mat{D}_G^{-1/2}$.
We sometimes drop the subscript $G$ when the underlying graph is clear from the context. 

%By definition, $B_G$'s each row, expressed by $b_i$, corresponds to an edge of $G$, and we define the \emph{leverage score} of $b_i$ by  $ \tau_i=b_i^{\rot}L^{+}_G b_i$, where $L^+_G$ is the \emph{pseudo-inverse} of the Laplacian matrix of $G$.

%\medskip

\subsection{Spectral sparsification}   For any  undirected and weighted graph $G=(V,E,w)$, we say a subgraph $H$ of $G$ with proper reweighting of the edges is a $(1+\varepsilon)$-spectral sparsifier if
\begin{equation}\label{eq:ssproperty}
(1-\varepsilon)L_G  \preceq  L_{H} \preceq  (1+\varepsilon)L_G.
\end{equation}
By definition, it is easy to show that, if we decompose the edge set of a graph $G=(V,E)$ into $E_1,\ldots,E_{\ell}$ for a constant
$\ell$ and $H_i$ is a  spectral sparsifier of $G_i=(V, E_i)$ for any $1\leq i\leq \ell$, then the graph formed by the union of  edge sets from $H_i$ is a spectral sparsifier of $G$.  It is known that, for any undirected graph $G$ of $n$ vertices, there is a $(1+\varepsilon)$-spectral sparsifier of $G$ with $O(n/\varepsilon^2)$ edges, and it can be constructed in almost-linear time~\cite{LS15}. 

The following lemma shows that a spectral sparsifier preserves the clustering structure of a graph.

\begin{lem}\label{lem:sssubset}
Let $H$ be a $(1+\varepsilon)$-spectral sparsifier of $G$ for some $\varepsilon\leq 1/3$. Then, it holds for any set $S\subseteq V$ that   $\phi_H(S)\in \left(\frac{1}{2},2\right)\phi_G(S)$.
\end{lem}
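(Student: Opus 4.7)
The plan is to express both the numerator $w(S,V\setminus S)$ and the denominator $\mu(S)$ appearing in $\phi(S)$ as quadratic forms $x^{\intercal} L x$ of the Laplacian, and then apply the spectral sparsifier inequality \eqref{eq:ssproperty} with suitable test vectors.

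First I would handle the numerator. A standard identity gives
\[
\mathbf{1}_S^{\intercal} L_G \mathbf{1}_S \;=\; \sum_{\{u,v\}\in E[G]} w(u,v)\bigl(\mathbf{1}_S(u)-\mathbf{1}_S(v)\bigr)^2 \;=\; w_G(S,V\setminus S),
\]
and similarly for $H$. Plugging $x=\mathbf{1}_S$ into \eqref{eq:ssproperty} therefore yields
\[
(1-\varepsilon)\, w_G(S,V\setminus S) \;\leq\; w_H(S,V\setminus S) \;\leq\; (1+\varepsilon)\, w_G(S,V\setminus S).
\]

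Next I would show the analogous two-sided bound for $\mu(S)$. The key observation is that the diagonal entries of the Laplacian already encode the degrees: $L_G(v,v)=d_v^G = \mathbf{e}_v^{\intercal} L_G \mathbf{e}_v$. Applying \eqref{eq:ssproperty} with $x=\mathbf{e}_v$ for each $v\in S$ and summing over $v\in S$ gives
\[
(1-\varepsilon)\,\mu_G(S) \;\leq\; \mu_H(S) \;\leq\; (1+\varepsilon)\,\mu_G(S).
\]
(Alternatively, one can set $x=\mathbf{1}_S$ in the quadratic form of $D_G$ after noting $D_G=\mathrm{diag}(L_G)$, but the per-vertex route is cleaner.)

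Finally, I would combine the two bounds:
\[
\phi_H(S) \;=\; \frac{w_H(S,V\setminus S)}{\mu_H(S)} \;\leq\; \frac{1+\varepsilon}{1-\varepsilon}\,\phi_G(S),
\qquad
\phi_H(S) \;\geq\; \frac{1-\varepsilon}{1+\varepsilon}\,\phi_G(S).
\]
For $\varepsilon \leq 1/3$ the ratio $(1+\varepsilon)/(1-\varepsilon)$ is at most $2$ and $(1-\varepsilon)/(1+\varepsilon)$ is at least $1/2$, which gives the claimed inclusion $\phi_H(S)\in(1/2,2)\phi_G(S)$. The only conceptually nontrivial step is the second one — recognizing that the spectral inequality transfers not only to the cut (via $\mathbf{1}_S$) but also to the volume (via the diagonal of $L$), so no separate comparison of $D_H$ and $D_G$ is required.
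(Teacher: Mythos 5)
Your proposal is correct and follows essentially the same approach as the paper: test the sparsifier inequality with single-vertex indicator vectors $\mathbf{e}_v$ to compare volumes, with the set indicator $\mathbf{1}_S$ to compare cut weights, and then divide the two two-sided bounds and invoke $\varepsilon\le 1/3$. The only difference is cosmetic (you bound the numerator before the denominator, whereas the paper does the reverse).
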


\begin{proof}
Let $x_u\in\mathbb{R}^n$ be the indicator vector of vertex $u$, i.e., $x_u(v)=1$ if $u=v$, and $x_u(v)=0$ otherwise. We have that \[
(1-\varepsilon)\cdot x_u^{\rot} L_Gx_u \leq x_u^{\rot} L_H x_u  \leq (1+\varepsilon)\cdot x_u^{\rot} L_G x_u,
\]
which  implies that  $
(1-\varepsilon)\cdot\mu_G(S)\leq \mu_H(S)\leq (1+\varepsilon)\cdot \mu_G(S)$ for any subset $S$.

Similarly, for any set $S\subseteq V$ we define the indicator vector of $S$ by $x_S\in\mathbb{R}^n$, where $x_S(u)=1$ if $u\in S$, and $x_S(u)=0$ otherwise. Hence, $x_S^{\rot}L_Gx_S=w_G(S,V\setminus S)$, and $x_S^{\rot}L_Hx_S=w_H(S,V\setminus S)$. Combining these with \eq{ssproperty}, we have that 
\[
(1-\varepsilon)\cdot w_G(S, V\setminus S)\leq w_H(S, V\setminus S)\leq (1+\varepsilon)\cdot w_G(S, V\setminus S).
\]
Hence, for any subset $S$ we have that 
\[
\phi_H(S) =\frac{w_H(S, V\setminus S)}{\vol_H(S)} \leq \frac{(1+\varepsilon) w_G(S, V\setminus S)}{(1-\varepsilon) \vol_G(S) } \leq 2\cdot\phi_G(S),
\]
where the last inequality holds by assuming $\varepsilon\leq 1/3$. Similarly, we have that
\[
\phi_H(S) =\frac{w_H(S, V\setminus S)}{\vol_H(S)} \geq \frac{(1-\varepsilon) w_G(S, V\setminus S)}{(1+\varepsilon) \vol_G(S) } \geq \frac{1}{2}\cdot\phi_G(S).
\]
Hence,   $\phi_H(S)$ and $\phi_G(S)$ differ by at most a factor of 2 for any vertex set $S$. 
\end{proof}

%\begin{lem}\label{lem:ssadd}
%Let $G_1$ and $G_2$ be two  graphs on the same vertex set, and $H_1$ and $H_2$ be  $(1+\varepsilon)$-spectral sparsifiers of $G_1$ and $G_2$ respectively, for some  $\varepsilon<1$. Then, $H_1\boxplus H_2$ is a $(1+\varepsilon)$-spectral sparsifier of $G_1\boxplus G_2$. 
%\end{lem} 
 
%\begin{proof}
%The statement holds by noticing that the Laplacian matrix of $G_1\boxplus G_2$~(as well as $H_1\boxplus H_1$) is the sum of Laplacian matrices of $G_1$ and $G_2$~(as well as $H_1$ and $H_2$).
%\end{proof}

\subsection{Models of computation}  We  study distributed clustering in two models for distributed data: the message passing model and the blackboard model. The message passing model represents those distributed computation systems with point-to-point communication, and the blackboard model represents those where messages can be broadcast to all parties. 

More precisely, in the message passing model there are $s$ sites $\mathcal{P}_1,\ldots, \mathcal{P}_s$, and one coordinator. These sites can talk to the coordinator through a two-way private channel. In fact, this is referred to as the coordinator model in Section \ref{sec:intro}, where it is shown to be equivalent to the point-to-point model up to small factors. The input is initially distributed at the $s$ sites. The computation is in terms of rounds: at the beginning of each round, the coordinator sends a message to some of the $s$ sites, and then each of those sites that have been contacted by the coordinator sends a message back to the coordinator.  At the end, the coordinator outputs the answer.   In the alternative blackboard model, the coordinator is simply a blackboard where these $s$ sites $\mathcal{P}_1,\ldots, \mathcal{P}_s$ can share information; in other words, if one site sends a message to the coordinator/blackboard then all the other $s-1$ sites can see this information without further communication. The order for the sites to speak is decided by the contents of the blackboard.

For both models we measure the \emph{communication cost} as the total number of bits sent through the channels.  %We comment that for most problems, including those in this paper, if the input is distributed into the $s$ sites such that each , then the communication cost between each site and the coordinator will also be balanced.  %
The two models are now standard in multiparty communication complexity (see, e.g., \cite{beopv13,pvz16,wz12}).  They are similar to the congested clique model~\cite{LPPP03}  studied in the distributed computing community; the main difference is that in our models we do not post any bandwidth limitations at each channel but instead consider the total number of bits communicated.

\subsection{Communication complexity}  For any problem $\mathcal{A}$ and any protocol $\Pi$ solving $\mathcal{A}$, the \emph{communication complexity} of a protocol $\Pi$ is the maximum communication cost of $\Pi$ over all possible inputs $X$.  When the protocol is randomised, we define 
the \emph{error} of $\Pi$   by 
\[
\max_X\mathbb{P}\left(\mbox{the coordinator outputs an incorrect answer on $X$}\right),
\]
where the $\max$ is over all inputs $X$ and the probability is over all random strings of the coordinator and $s$ sites.
 The \emph{$\delta$-error randomised communication complexity} $\mathsf{R}_{\delta}(\mathcal{A})$ of a problem $\mathcal{A}$ in the message passing model  is the minimum communication complexity of any randomised protocol $\Pi$ that solves $\mathcal{A}$ with error at most $\delta$.  
 
Let $\mu$ be an input distribution on $X$. We call a deterministic protocol $(\delta, \mu)\text{-error}$ if it gives the correct answer for $\mathcal{A}$ on at least a $1 - \delta$ fraction of all input pairs, weighted by the distribution $\mu$. We denote $\mathsf{D}_{\delta, \mu}(\mathcal{A})$ as the cost of the minimum-communication $(\delta, \mu)\text{-error}$ protocol. A standard lemma in communication complexity called Yao's minimax lemma shows that 
$\mathsf{R}_{\delta}(\mathcal{A}) \ge \max_\mu \mathsf{D}_{\delta, \mu}(\mathcal{A}).$
%Thus to prove a lower bound for randomized communication complexity we can just find a hard input distribution and prove deterministic communication complexity for that distribution.

\subsection{Information complexity}  We abuse notation by using $\Pi$ for both the protocol and its transcript (its concatenation of messages). In the message passing model, let $\Pi_i\ (i \in [s])$ be the transcript (set of messages exchanged) between the $i$-th site and the coordinator.  Then $\Pi$ can be seen as a concatenation $\Pi_1 \circ \Pi_2 \circ \ldots \circ \Pi_s$ ordered by the timestamps of the messages.
We define the information complexity of a problem $\mathcal{A}$ in the message passing model by 
\[
\mathsf{IC}_{\mu, \delta}(\mathcal{A}) = \min_{(\delta, \mu)\text{-error}\ \Pi} \sum_{i \in [s]} I(X_1, \ldots, X_s; \Pi_i),
\]where $I(\cdot\ ;\ \cdot)$ is the mutual information function. It has been shown in \cite{HRVZ15} that $\mathsf{R}_{\delta}(\mathcal{A}) \ge \mathsf{IC}_{\delta, \mu}(\mathcal{A})$ for any input distribution $\mu$.

\section{Distributed graph  clustering\label{sec:graphmsg}}

In this section we study distributed graph clustering.   We assume that the vertex set of  the input graph $G=(V,E)$ can be partitioned into $k$ clusters, where vertices in each cluster $S$ are highly connected to each other, and there are fewer edges between $S$ and $V\setminus S$.  To formalize this notion, 
 we  define the \emph{$k$-way expansion constant} of graph $G$ by 
\[
\rho(k)\triangleq \min_{\mbox{\footnotesize partition $A_1,\ldots, A_k$}}\max_{1\leq i\leq k}\phi_G(A_i).
\] Notice that a graph $G$ has $k$ clusters if the value of $\rho(k)$ is small. It was shown in \cite{conf/stoc/LeeGT12} that $\rho(k)$ closely relates to  $\lambda_k(\mathcal{L}_G)$ by the following higher-order Cheeger inequality:
\begin{equation*}
\frac{\lambda_k(\mathcal{L}_G)}{2}\leq \rho(k)\leq O(k^2) \sqrt{\lambda_k(\mathcal{L}_G)}.
\end{equation*}
Hence, a large gap between $\lambda_{k+1}(\mathcal{L}_G)$ and $\rho(k)$ implies (i) the existence of 
a $k$-way partition $\{S_i\}_{i=1}^k$ such that every $S_i$ has small conductance
 $\phi_G(S_i)\leq \rho(k)$, and (ii) any $(k+1)$-way partition of $G$ contains a subset with  high conductance $\rho(k+1)\geq \lambda_{k+1}(\mathcal{L}_G)/2$. Therefore, a large gap between $\lambda_{k+1}(\mathcal{L}_G)$ and $\rho(k)$ ensures that $G$ has \emph{exactly} $k$ clusters. In the following, we assume that 
\[
\Upsilon\triangleq \lambda_{k+1}(\mathcal{L}_G)/\rho(k)=\Omega(k^3)
\]  
to ensure that the input graph $G$ has exactly $k$ clusters. The same assumption has been used 
 in the literature for studying graph clustering in the centralized setting~\cite{PSZ15}.

Both algorithms presented in the section are based on the following \emph{spectral clustering} algorithm: (i) compute the  $k$ eigenvectors $f_1,\ldots, f_k$ of $\mathcal{L}_G$ associated with $\lambda_1(\mathcal{L}_G),\ldots,\lambda_k(\mathcal{L}_G)$; (ii) embed every vertex $v$ to a point in $\mathbb{R}^k$ through the embedding 
\[
F(v)=\frac{1}{\sqrt{d_v}}\cdot(f_1(v),\ldots, f_k(v));
\]
 (iii) run $k$-means on the embedded points $\{F(v)\}_{v\in V}$, and group the vertices of $G$ into $k$ clusters according to the output of $k$-means.

\subsection{The message passing model\label{sec:gcmessage}}

We assume  the edges of the input graph $G=(V,E)$ are arbitrarily allocated among $s$ sites $\mathcal{P}_1,\cdots, \mathcal{P}_s$, and we use  $E_i$ to denote the edge set maintained by site $\mathcal{P}_i$. Our proposed algorithm consists of two steps: (i) every  $\mathcal{P}_i$ computes a linear-sized $(1+\eps)$-spectral sparsifier $H_i$ of $G_i\triangleq (V, E_i)$, 
for a small constant $\eps\leq 1/10$,
and sends the edge set of  $H_i$, denoted by $E_i'$, to the coordinator; (ii) the coordinator runs a spectral clustering algorithm on the union of received graphs $H\triangleq \left(V,\bigcup_{i=1}^k E_i' \right)$. The theorem  below summarizes the performance of this algorithm, and shows the approximation guarantee of this algorithm is as good as the provable guarantee of spectral clustering known in the centralized setting, which is shown in the lemma below.

\begin{lem}[\cite{PSZ15}]\label{lem:PSZ}
Let $G$ be a graph satisfying the condition $\Upsilon=\Omega(k^3)$, and $k\in \mathbb{N}$.
Then, a spectral clustering algorithm outputs sets $A_1,\ldots, A_k$ such that $\vol(A_i\triangle S_i)= O\left(k^3\cdot\Upsilon^{-1}\cdot\vol(S_i)\right)$ holds for any $1\leq i\leq k$, where $S_i$ is the optimal cluster corresponding to $A_i$.
\end{lem}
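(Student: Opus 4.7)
The plan is to prove the lemma by (i) establishing a structural correspondence between the bottom $k$ eigenvectors of $\mathcal{L}_G$ and the normalized indicator vectors of the optimal clusters $S_1,\ldots,S_k$, and then (ii) transferring that correspondence through the embedding $F$ into a geometric statement that can be analysed by weighted $k$-means.

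First, for each optimal cluster $S_i$ I would consider the ``normalized indicator'' $\chi_i \triangleq D_G^{1/2}\mathbf{1}_{S_i}/\sqrt{\vol(S_i)}$. A short direct calculation shows $\chi_i^{\rot}\mathcal{L}_G\chi_i = \phi_G(S_i)\le \rho(k)$, and the $\chi_i$ are orthonormal because they have disjoint supports. Expanding $\chi_i = \sum_{j\ge 1}\beta_{i,j}f_j$, the high-frequency tail $\chi_i^{\perp}\triangleq\sum_{j>k}\beta_{i,j}f_j$ obeys $\|\chi_i^{\perp}\|^2\cdot\lambda_{k+1}(\mathcal{L}_G)\le \chi_i^{\rot}\mathcal{L}_G\chi_i\le \rho(k)$, hence $\|\chi_i^{\perp}\|^2\le 1/\Upsilon$. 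Consequently, the matrix $B\in\mathbb{R}^{k\times k}$ with entries $B_{ij}=\langle\chi_i,f_j\rangle$ differs from an orthogonal matrix by $O(k/\sqrt{\Upsilon})$ in Frobenius norm.

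Second, I would use $B$ to define ``ideal centres'' $p_i\in\mathbb{R}^k$, roughly $p_i=(B^{-\rot}e_i)/\sqrt{\vol(S_i)}$, chosen so that for $v\in S_i$ the ideal embedding of $v$ is exactly $p_i$. Combining the tail bound on $\chi_i^{\perp}$ with the near-orthogonality of $B$, I would show
\[
\sum_{i=1}^{k}\sum_{v\in S_i} d_v\cdot\|F(v)-p_i\|^2 \;=\; O(k/\Upsilon),
\]
while simultaneously the $p_i$'s are pairwise separated by $\Omega(1/\sqrt{\vol(S_i)}+1/\sqrt{\vol(S_j)})$, which is precisely the ``well-separated centres'' picture that makes $k$-means succeed.

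Third, the spectral clustering algorithm applies an $O(1)$-approximate weighted $k$-means on $\{F(v)\}$ with weights $d_v$. The optimum cost is at most $O(k/\Upsilon)$ by the displayed bound, so the returned cost is still $O(k/\Upsilon)$. Pairing each returned $A_i$ with the $p_{\sigma(i)}$ closest to its empirical centre and using the separation lower bound, any vertex assigned to the wrong cluster must pay $\Omega(1/\vol(S_{\sigma(i)}))$ in cost; summing gives that the volume of misclassified vertices relative to $S_i$ is at most $O(k^3/\Upsilon)\cdot\vol(S_i)$, where the extra $k^2$ comes from tracking the distortion of $B$ away from the identity and the use of an approximate (rather than optimal) $k$-means solver.

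The main obstacle is the final step: identifying the output clusters $A_1,\ldots,A_k$ with the ground-truth $S_1,\ldots,S_k$ (defined only up to a permutation) and charging each misclassified vertex to the $k$-means cost with the correct polynomial dependence on $k$. Showing that the approximate centres returned by $k$-means genuinely ``latch onto'' the $p_i$'s, and obtaining the $k^3$ exponent rather than some larger power, requires the careful centre-separation and cost-accounting argument of \cite{PSZ15}; this is where I would expect to spend most of the technical effort.
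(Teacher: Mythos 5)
The paper does not prove this lemma; it is imported verbatim (as the citation tag \verb|[\cite{PSZ15}]| indicates) from Peng, Sun, and Zanetti, and is used as a black box in the proof of Theorem~\ref{thm:gcss}. So there is no in-paper argument to compare against; what you have produced is a reconstruction of the \emph{cited} proof, and it is a fair one: you correctly identify the normalized indicator vectors $\chi_i = D_G^{1/2}\mathbf{1}_{S_i}/\sqrt{\vol(S_i)}$, the Rayleigh-quotient bound $\chi_i^{\rot}\mathcal{L}_G\chi_i = \phi_G(S_i)\le\rho(k)$, the tail bound $\|\chi_i^{\perp}\|^2\le \rho(k)/\lambda_{k+1} = 1/\Upsilon$, the near-orthogonality of the Gram-type matrix $B$, the ideal centres $p_i$, the small $k$-means cost, and the centre-separation step. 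These are exactly the ingredients of the PSZ15 ``structure theorem'' and its downstream $k$-means analysis.

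Two details are slightly off, though neither changes the shape of the argument. First, the ideal centre should be the (rescaled) $i$-th \emph{row} of $B$, namely $p_i = B^{\rot}e_i/\sqrt{\vol(S_i)}$: if $f_j \approx \sum_i B_{ij}\chi_i$ then for $v\in S_i$ one gets $F(v)_j\approx B_{ij}/\sqrt{\vol(S_i)}$. Writing $B^{-\rot}e_i$ picks out a column when $B$ is near-orthogonal, so the indices are transposed. Second, the separation between ideal centres in PSZ15 carries an extra $1/k$, i.e.\ $\|p_i - p_j\|^2 = \Omega\bigl(\tfrac{1}{k}\min(1/\vol(S_i),1/\vol(S_j))\bigr)$, and the aggregate $k$-means cost is $O(k^2/\Upsilon)$ rather than $O(k/\Upsilon)$; it is the interplay of these two quantities (cost $O(k^2/\Upsilon)$ divided by per-vertex misclassification penalty $\Omega(1/(k\,\vol(S_i)))$) that yields the $k^3$ exponent in the statement, not an approximate-solver overhead as you suggest. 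You were right to flag the final charging argument as the technically delicate part; the permutation-matching of $A_i$ to $S_i$ and the volume-charging is indeed where most of the work in PSZ15 lies.
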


\begin{thm}\label{thm:gcss}
Let $G=(V,E)$ be an $n$-vertex graph with $\Upsilon=\Omega(k^3)$, and suppose the edges of $G$ are
arbitrarily allocated among $s$ sites.  Assume  $S_1,\cdots, S_k$ is an optimal partition that achieves $\rho(k)$.
Then, the algorithm above computes  a partition $A_1,\ldots, A_k$ satisfying $\vol(A_i\triangle S_i)= O\left( k^3\cdot\Upsilon^{-1}\cdot\vol(S_i)\right)$  for any $1\leq i\leq k$. The total communication cost of this algorithm is $\tilde{O}(ns)$ bits. 
\end{thm}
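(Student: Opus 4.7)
The plan has two orthogonal pieces: bounding the communication cost, and arguing that the partition recovered on the coordinator's side inherits the guarantee of Lemma 3.1.

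For communication, each site $\mathcal{P}_i$ sends the edges of its $(1+\varepsilon)$-spectral sparsifier $H_i$ of $G_i$. Using the construction cited in the preliminaries, $H_i$ has only $O(n/\varepsilon^2)$ edges, and each edge (two endpoints and a weight under standard precision) is encoded in $O(\log n)$ bits. Taking $\varepsilon \le 1/10$ a constant, summing over the $s$ sites yields $\tilde{O}(ns)$ bits.

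For correctness, first observe that $H \triangleq \left(V,\bigcup_{i=1}^s E_i'\right)$ is itself a $(1+\varepsilon)$-spectral sparsifier of $G$. This follows directly from the additivity $L_G=\sum_i L_{G_i}$ and $L_H=\sum_i L_{H_i}$ together with the pointwise PSD inequalities $(1-\varepsilon)L_{G_i}\preceq L_{H_i}\preceq (1+\varepsilon)L_{G_i}$, i.e., the union-of-sparsifiers property noted in Section 2.2. Next, one needs to transfer the spectral structure of $G$ to $H$. Testing the PSD bound against single-vertex indicator vectors gives $(1-\varepsilon) D_G \preceq D_H \preceq (1+\varepsilon) D_G$, which combined with the sparsifier guarantee on $L$ yields $\lambda_i(\mathcal{L}_H) = (1\pm O(\varepsilon))\lambda_i(\mathcal{L}_G)$ for every $i$; simultaneously, Lemma 2.2 gives $\phi_H(S)=\Theta(\phi_G(S))$ for every $S$, hence $\rho_H(k)=\Theta(\rho_G(k))$. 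Therefore $\Upsilon_H=\Theta(\Upsilon_G)=\Omega(k^3)$, which places $H$ in the regime of Lemma 3.1.

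Now apply Lemma 3.1 to $H$, on which the coordinator actually runs spectral clustering. This produces $A_1,\ldots,A_k$ with $\vol_H(A_i\triangle \tilde{S}_i)=O(k^3\Upsilon_H^{-1}\vol_H(\tilde{S}_i))$, where $\{\tilde{S}_i\}$ is the optimal $k$-way partition of $H$. The final step identifies $\tilde{S}_i$ with the target $S_i$: by Lemma 2.2 the partition $\{S_i\}$ is itself a $k$-partition of $H$ with $\phi_H(S_i)\le 2\rho(k)=O(\rho_H(k))$, so under $\Upsilon_H=\Omega(k^3)$ the two partitions must coincide up to a symmetric difference of the same $O(k^3\Upsilon^{-1}\vol(S_i))$ order. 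Converting $\vol_H$ back to $\vol_G$ loses only a $(1\pm\varepsilon)$ factor and delivers the stated bound.

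The delicate step is the identification of $\tilde{S}_i$ with $S_i$: Lemma 2.2 compares conductances of \emph{fixed} sets, not of minimizing partitions, so a small extra argument is required. The cleanest route is to avoid introducing $\{\tilde{S}_i\}$ at all and instead apply the PSZ structural theorem underlying Lemma 3.1 directly with $\{S_i\}$ as the reference partition, exploiting the fact that this partition is simultaneously a low-conductance $k$-partition of $H$ and that $H$ has spectral gap of the required order; alternatively, a stability argument for low-conductance partitions under spectral perturbations yields the same conclusion at the cost of absorbing a further constant into the $O(\cdot)$.
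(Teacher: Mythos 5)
Your proposal follows essentially the same route as the paper's proof: decompose $L_G=\sum_i L_{G_i}$, use additivity of sparsifiers to get a $(1+\varepsilon)$-sparsifier $H$ of $G$, then argue that both $\rho(k)$ and $\lambda_k(\mathcal{L})$ are preserved up to constant factors so that the gap condition $\Upsilon=\Omega(k^3)$ transfers to $H$, and finally invoke Lemma~\ref{lem:PSZ} on $H$. The communication bound is identical.

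The one place you go beyond the paper is the ``identification'' step, and you are right to flag it. The paper simply writes that ``by Lemma~\ref{lem:PSZ}, the output of a spectral clustering algorithm on $H$ satisfies the claimed properties'' -- but Lemma~\ref{lem:PSZ}, applied to $H$, relates the clustering output to an optimal partition \emph{of $H$}, whereas the theorem's conclusion is stated against the optimal partition $\{S_i\}$ \emph{of $G$}. This is exactly the gap you name. Your first suggested repair is the right one and mirrors what the underlying structure theorem of~\cite{PSZ15} actually proves: that theorem is stated relative to \emph{any} $k$-partition whose conductance is within the required order (it does not require the reference partition to be the exact minimizer), and $\{S_i\}$ is such a partition for $H$ by Lemma~\ref{lem:sssubset}. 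So your more careful treatment is a genuine improvement on the exposition in the paper, at no cost to the argument's structure. The alternative ``stability of low-conductance partitions under spectral perturbation'' route you mention is also workable but would need to be made quantitative; the first route is cleaner because it is just a more precise reading of what~\cite{PSZ15} delivers.

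One small nit: when you write $(1-\varepsilon)D_G\preceq D_H\preceq(1+\varepsilon)D_G$ and deduce $\lambda_i(\mathcal{L}_H)=(1\pm O(\varepsilon))\lambda_i(\mathcal{L}_G)$, the comparison of normalized Laplacian eigenvalues is not a direct matrix inequality (since $D_H^{-1/2}L_H D_H^{-1/2}$ and $D_G^{-1/2}L_G D_G^{-1/2}$ use different conjugations). The clean way is Courant--Fischer applied to the generalized Rayleigh quotient $x^{\rot}Lx/x^{\rot}Dx$, which makes the two approximations multiply as claimed; this is implicitly what the paper's proof does too. Worth spelling out if you want the argument to be airtight, but it is a routine point.
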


\begin{proof}
By the definition of the Laplacian matrix, we have that $L_G=\sum_{i=1}^s L_{G_i}$. Since every $H_i$ is a $(1+\eps)$-spectral sparsifier of graph $G_i$,   we have that 
$(1-\eps) L_{H_i}\preceq L_{G_i} \preceq (1+\eps) L_{H_i}$.
This implies that $
(1-\eps) L_{H}\preceq L_{G} \preceq (1+\eps) L_{H}$,
by the definition of $H_i$ and graph Laplacians. 
Now we show that our assumption on $\Upsilon$ is preserved in $H$. By \lemref{sssubset}, we have for any $1\leq i \leq k$ that 
$\phi_H(S_i)\in \left(\frac{1}{2},2\right) \phi_G(S_i)$,
which implies that $S_i$ has low conductance in $H$, and $\rho_H(k)\in \left(\frac{1}{2},2\right)\rho_G(k)$. To show that $\lambda_k(\mathcal{L}_H)$ is a constant approximation of $\lambda_k(\mathcal{L}_G)$, notice that \[(1-c)\cdot x^{\rot} L_Gx \leq x^{\rot} L_H x \leq (1+c)\cdot x^{\rot} L_G x\] holds for any $x\in\mathbb{R}^n$. Hence
it holds for any $x\in\mathbb{R}^n$ that 
\[
(1-\varepsilon)\cdot x^{\rot}D_G^{-1/2} L_GD_G^{-1/2}x \leq x^{\rot} D_G^{-1/2}L_H D_G^{-1/2}x \leq (1+\varepsilon)\cdot x^{\rot}D_G^{-1/2} L_GD_G^{-1/2} x.\]
Since $D_G^{-1/2} L_GD_G^{-1/2}=\mathcal{L}_G$ and  $\frac{1}{2} D_G^{-1}\preceq D_H^{-1}\preceq 2 D_G^{-1}$, we have that $\lambda_i\left(\mathcal{L}_H\right)=\Theta\left(\lambda_i\left(\mathcal{L}_G\right)\right)$, and the  assumption on $\Upsilon$ in $H$ is preserved from $G$ up to a constant factor.  By \lemref{PSZ}, the output of a spectral clustering algorithm on $H$ satisfies the claimed properties.
The total communication cost of $\tilde{O}(ns)$  bits follows from the fact that every $H_i$ has  $O(n)$ edges.
\end{proof}

Next we show that the communication cost of our proposed algorithm is optimal up to a logarithmic factor. 
 Our analysis is based on a reduction from graph clustering to the Multiparty Set-Disjointness problem~(\DISJN): for any $s$ sites  $\mathcal{P}_1,\ldots, \mathcal{P}_s$, where each  $\mathcal{P}_i$ has a set $S_i\subseteq [n]$, let $X_i = (X_i^1, \ldots, X_i^n)$ be the characteristic vector of $S_i$, and let $X = (X_1, \ldots, X_s)$ be the input matrix with $X_i$ being the $i$-th row.  Let $X^j = (X_1^j, \ldots, X_s^j)$ be the $j$-th column of the input matrix $X$. 
We define a function \ONE\ on an $s$-bit vector $Y = (Y_1, \ldots, Y_s)$ as
$
\text{\ONE} (Y)  =  \bigwedge_{i \in [s]} Y_i$, and $ \text{\DISJN}(X) = \bigvee_{j \in [n]} \text{\ONE} (X^j)$. Then the \DISJN\ problem asks the value of $\text{\DISJN}(X)$.
We  introduce two hard input distributions for \ONE\ and \DISJN\ respectively.  
\begin{itemize}
\item {\em Hard input distribution $\nu$ on $Y \in \{0,1\}^s$ for \ONE:} with probability $1/2$, we choose each $Y_i\ (i \in [s])$ to be $0$ or $1$ with equal probability; with probability $1/4$ we choose $Y$ to be an all-$1$ vector; and with the remaining probability $1/4$ we choose $Y$ to be a random vector with $n-1$ coordinates being $1$'s and a random coordinate being $0$.
\item {\em Hard input distribution $\mu_n$ on $X \in \{0,1\}^{s \times n}$ for \DISJN:}  For each $j \in [n]$, we choose $X^j \sim \nu$.  
\end{itemize}
%We will use the following results by Braverman et al.

\begin{thm}[\cite{beopv13}] \label{thm:DISJ} 
It holds that $\mathsf{IC}_{0.49, \nu}(\text{\ONE}) = \Omega(s)$, and 
  $\mathsf{IC}_{0.49, \nu}(\text{\DISJN}) = \Omega(sn)$.
\end{thm}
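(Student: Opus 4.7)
The plan is to first establish the $\Omega(s)$ lower bound for \ONE\ under $\nu$, and then derive the $\Omega(sn)$ lower bound for \DISJN\ via a direct sum argument that exploits the product structure $\mu_n = \nu^{\otimes n}$. Throughout I work in the coordinator model, so the transcript decomposes as $\Pi = \Pi_1 \circ \cdots \circ \Pi_s$ and the information cost is $\sum_i I(Y;\Pi_i)$.

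For \ONE, I would follow the BJKS-style ``embedded distinguishing'' strategy. The point of the distribution $\nu$ is that the hard event (with constant mass) is distinguishing the all-ones vector from a vector that has a single planted zero at a uniformly random position $D \in [s]$; the remaining probability $1/2$ is a ``noise'' component that makes each marginal $Y_i$ look like an unbiased coin, so that no single bit by itself reveals much. Conditioning on $D = i$, I would exhibit a reduction from a two-party distinguishing task between $Y_i = 0$ and $Y_i = 1$ (with all other bits frozen to $1$) to a protocol for \ONE, so that correctness of $\Pi$ on these two inputs with constant advantage implies that $\Pi_i$ must distinguish them. A standard Hellinger-distance / Pinsker argument then forces $I(Y_i;\Pi_i \mid \text{aux}) = \Omega(1)$, and summing over $i \in [s]$ gives $\sum_i I(Y;\Pi_i) = \Omega(s)$.

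For \DISJN, the direct sum step exploits that under $\mu_n$ the columns $X^1,\dots,X^n$ are i.i.d.~$\sim \nu$. Given any $(0.49,\mu_n)$-error protocol $\Pi$ for \DISJN, one simulates an \ONE\ instance at a random coordinate $J \in [n]$ by having the sites fill all coordinates $j \neq J$ with fresh samples from the $\nu$-marginal using public randomness; by construction the simulated input is distributed as $\mu_n$, and $\text{\DISJN}(X) = \text{\ONE}(X^J)$ whenever the other columns are not all-ones, which happens except with an arbitrarily small constant probability. Combining this embedding with the chain-rule style inequality
\[
\sum_{i=1}^{s} I(X;\Pi_i) \;\geq\; \sum_{i=1}^{s}\sum_{j=1}^{n} I(X^j_i;\Pi_i \mid X^{<j}) \;\geq\; n\cdot \mathsf{IC}_{0.49,\nu}(\text{\ONE}),
\]
yields the claimed $\Omega(sn)$ lower bound on $\mathsf{IC}_{0.49,\mu_n}(\text{\DISJN})$.

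The main obstacle is the multi-site nature of the decomposition: in the coordinator model the per-site transcripts $\Pi_i$ are correlated through the coordinator's messages, so a naive chain-rule expansion does not factorize cleanly into independent two-party subproblems. The standard remedy, which I would adopt, is to introduce an auxiliary public random variable (the index $D$ for \ONE, and the index $J$ together with the non-pivotal columns for \DISJN) such that, conditioned on it, the inputs at different sites become conditionally independent and one can invoke the two-party Hellinger-distance lower bound of \cite{beopv13} site by site. Setting up this conditioning so that both the error guarantee is preserved and the marginal on $Y_i$ (resp.~$X^j$) remains essentially unchanged is the delicate step; once it is in place, the ALLONE bound and the direct sum combine to give the theorem.
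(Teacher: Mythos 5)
The paper does \emph{not} prove \thmref{DISJ}; it is invoked as a black-box citation to \cite{beopv13}, so there is no in-paper proof to compare against. Your sketch is in the right spirit (it is essentially the BJKS/\cite{beopv13} strategy of embedding a two-party distinguishing task for \ONE\ at a random position $D$, bounding per-site information via Hellinger/Pinsker, then lifting to \DISJN\ by a direct sum over columns), and for the \ONE\ part the plan is sound.

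However, there is a genuine gap in the direct-sum step as you have written it. You claim that when you embed an \ONE\ instance into a random column $J$ and fill the remaining columns with public samples from $\nu$, then ``\DISJN$(X) = $\ONE$(X^J)$ whenever the other columns are not all-ones, which happens except with an arbitrarily small constant probability.'' Under the product distribution $\mu_n = \nu^{\otimes n}$ as defined in the paper, each column independently is the all-ones vector with probability roughly $1/4$ (this is built into $\nu$ so that \ONE\ is nontrivial on a single column). Hence the probability that \emph{none} of the $n-1$ other columns is all-ones is about $(3/4)^{n-1}$, which is exponentially small, not ``except with arbitrarily small constant probability.'' In other words, under the pure product distribution almost every input has \DISJN$(X)=1$ and the embedded column $X^J$ is essentially masked; the reduction you describe does not preserve a constant-advantage distinguishing task. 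The displayed chain
\[
\sum_{i=1}^{s} I(X;\Pi_i) \;\geq\; \sum_{i=1}^{s}\sum_{j=1}^{n} I(X^j_i;\Pi_i \mid X^{<j}) \;\geq\; n\cdot \mathsf{IC}_{0.49,\nu}(\text{\ONE})
\]
therefore breaks at the second inequality: the conditional protocol on column $j$ is not a valid $(0.49,\nu)$-error protocol for \ONE, because the value of \DISJN\ is nearly determined by the other columns. The actual argument in \cite{beopv13} handles this by using a different (non-product, or suitably conditioned) hard distribution for \DISJN\ in which the embedded column is the \emph{unique} potential all-ones column, and by running the per-site Hellinger argument conditioned on that structure. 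Without that conditioning, the direct sum does not go through, and this is precisely the ``delicate step'' you flag at the end but do not actually resolve.
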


\begin{lem} \label{lem:DISJ}
In the message passing model, any randomized algorithm that computes \DISJN\ correctly with probability $0.9$ needs $\Omega(sn)$ bits of communication.
\end{lem}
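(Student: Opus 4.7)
The plan is to derive the communication lower bound as a direct corollary of the information-complexity lower bound in \thmref{DISJ}, using only the standard inequalities between randomized communication complexity, distributional complexity, and information complexity that the preliminaries have already recorded. So this will be a short, structural proof rather than a fresh combinatorial argument.

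First, I would fix the hard distribution $\mu_n$ on $X\in\{0,1\}^{s\times n}$ used in \thmref{DISJ}, and apply Yao's minimax lemma stated in the preliminaries:
\[
\mathsf{R}_{0.1}(\text{\DISJN}) \;\geq\; \mathsf{D}_{0.1,\mu_n}(\text{\DISJN}).
\]
Next I would invoke the inequality from \cite{HRVZ15} recorded at the end of the information-complexity subsection,
\[
\mathsf{R}_{\delta}(\mathcal{A}) \;\geq\; \mathsf{IC}_{\delta,\mu}(\mathcal{A}),
\]
applied with $\mathcal{A}=\text{\DISJN}$, $\mu=\mu_n$, $\delta=0.1$.

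Second, I would use the monotonicity of $\mathsf{IC}_{\delta,\mu}$ in $\delta$. Every $(0.1,\mu_n)$-error protocol is in particular a $(0.49,\mu_n)$-error protocol, so the minimization defining $\mathsf{IC}_{0.1,\mu_n}$ is over a subclass of those defining $\mathsf{IC}_{0.49,\mu_n}$, which gives
\[
\mathsf{IC}_{0.1,\mu_n}(\text{\DISJN}) \;\geq\; \mathsf{IC}_{0.49,\mu_n}(\text{\DISJN}).
\]
Chaining these inequalities with the $\Omega(sn)$ lower bound from \thmref{DISJ} yields
\[
\mathsf{R}_{0.1}(\text{\DISJN}) \;\geq\; \mathsf{IC}_{0.1,\mu_n}(\text{\DISJN}) \;\geq\; \mathsf{IC}_{0.49,\mu_n}(\text{\DISJN}) \;=\; \Omega(sn),
\]
which is the claimed bound.

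There is no real obstacle here, since all of the heavy lifting is packaged inside \thmref{DISJ}; the only thing to be careful about is matching up the error parameter ($0.1$ in the lemma statement versus $0.49$ in the imported bound), and this is handled by the trivial monotonicity observation above. One minor bookkeeping point is that public versus private randomness differs by at most an additive $O(\log(sn))$ term by Newman's theorem, which is absorbed in $\Omega(sn)$ and does not affect the conclusion.
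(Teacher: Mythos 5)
Your proof is correct and follows essentially the same route the paper intends: pass from randomized to distributional/information complexity over the hard distribution $\mu_n$, then apply \thmref{DISJ}. The paper's own proof is the one-liner ``follows from \thmref{DISJ} and Yao's minimax lemma,'' and you have simply unpacked that, correctly noting the monotonicity $\mathsf{IC}_{0.1,\mu_n}\geq\mathsf{IC}_{0.49,\mu_n}$ needed to bridge the error-parameter mismatch (a detail the paper elides). One small stylistic point: your first step (Yao's minimax, $\mathsf{R}_{0.1}\geq\mathsf{D}_{0.1,\mu_n}$) is not actually used in your final chain, since the \cite{HRVZ15} inequality $\mathsf{R}_{\delta}\geq\mathsf{IC}_{\delta,\mu}$ that you invoke next already subsumes it; you could drop that step without loss.
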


\begin{proof}
The lemma follows from \thmref{DISJ} and Yao's minimax lemma.
\end{proof}

\begin{thm}\label{thm:lbmsg}
Let $G$ be an undirected graph with $n$ vertices, and suppose the edges of $G$ are distributed among $s$ sites. Then, any algorithm that correctly outputs a constant fraction of a cluster in $G$ requires $\Omega(ns)$ bits of communication. This holds even if each cluster has constant expansion.  
\end{thm}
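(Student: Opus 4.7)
The plan is to reduce from the multi-party set-disjointness problem $\mathsf{DISJ}_{s,n}$, whose $\Omega(sn)$ communication lower bound is recorded in Lemma~\ref{lem:DISJ}. Given an instance $X\in\{0,1\}^{s\times n}$ with site $i$ holding the $i$-th row $X_i$, I would construct a weighted graph $G_X$ on $\Theta(n)$ vertices, consisting of two anchor expanders $A_0$ and $B_0$ of size $\Theta(s)$ with distinguished site-vertices $a_1,\ldots,a_s\in A_0$ and $b_1,\ldots,b_s\in B_0$, together with swing vertices $v_1,\ldots,v_n$. For every pair $(i,j)$, site $i$ locally adds one edge incident to $v_j$: a weight-$1$ edge to $a_i$ if $X_i^j=1$, and otherwise a weight-$W$ edge to $b_i$ with $W=\Theta(s)$. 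The anchor expanders are input-independent and can be treated as common knowledge, and each site holds $O(n)$ edges.

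Next I would show that $G_X$ admits a natural $2$-partition $\{A,B\}$ of constant expansion with $A\supseteq A_0$, $B\supseteq B_0$, and with the key property that $v_j\in A$ iff $\mathsf{ALLONE}(X^j)=1$. This holds because a column of all ones gives $v_j$ only weight-$1$ edges into $A_0$, whereas a single zero bit injects a weight-$W$ edge into $B_0$ that dominates the at most $s-1$ light edges on the opposite side. A routine conductance calculation, using $|A_0|+|B_0|=\Theta(s)$ and cluster volumes and cut weight both of order $\Theta(sn)$, yields $\phi(A),\phi(B)=\Theta(1)$ regardless of $X$. To finish, I would take any protocol $\mathcal{A}$ outputting a set $S$ with $|S\cap C|\ge\alpha|C|$ for some cluster $C$ and convert it into a $\mathsf{DISJ}$ protocol: embed a constant-size ``tag'' into each anchor so that the identity of $C$ can be read off from $S$, then report the columns $j$ for which $v_j\in S$ lies on the $A$-side as those with $\mathsf{ALLONE}(X^j)=1$. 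Under the hard distribution $\mu_n$ both clusters have $\Theta(n)$ swing vertices while $|A_0|+|B_0|=o(n)$, so $S$ must contain $\Omega(n)$ swing vertices on the appropriate side, yielding a constant advantage for $\mathsf{DISJ}_{s,n}$. Applying Lemma~\ref{lem:DISJ} and Yao's minimax principle then gives the $\Omega(ns)$ bound.

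The main obstacle will be engineering the swing-vertex gadget so that a single-bit perturbation in $X_i^j$ can flip $v_j$'s cluster assignment while simultaneously preserving constant expansion of both clusters; this is what forces the asymmetric weighting with $W\gg 1$, or equivalently a small auxiliary subgadget per heavy edge if an unweighted realization is desired. A secondary subtlety is making the reduction succeed regardless of which cluster the algorithm happens to report, which is handled by the anchor tags and by the fact that under $\mu_n$ each cluster must contribute $\Omega(n)$ swing vertices to the algorithm's output.
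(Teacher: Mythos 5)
Your proposal takes a genuinely different route from the paper's, and it has some real gaps that I don't think are merely ``engineering'' challenges.

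\paragraph{What the paper actually does.} The paper's construction is much simpler. It builds an \emph{unweighted bipartite} graph on $n+s$ vertices $\ell_1,\ldots,\ell_n$ and $r_1,\ldots,r_s$, with an edge $\{\ell_j, r_i\}$ iff $X_i^j = 0$ (item $j$ present at site $i$). When $\text{\ONE}(X^j)=1$, the vertex $\ell_j$ is \emph{isolated}, so it becomes a singleton connected component; the remaining graph is a bipartite expander w.h.p. under $\mu_n$, forming a single giant cluster. The clusters are literally the connected components, so $\rho(k)=0$, the assumption $\Upsilon=\Omega(k^3)$ holds trivially, and there is no delicate parameter balancing whatsoever. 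Recovering a constant fraction of a cluster amounts to detecting isolated vertices, which determines \DISJN.

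\paragraph{Where your construction runs into trouble.} First, the claim that the optimal two-way partition of $G_X$ places $v_j\in A$ iff $\text{\ONE}(X^j)=1$ is not established, and it is likely false or at best fragile. With $W=\Theta(s)$, a single-zero column $j$ has $v_j$ incident to one weight-$W$ edge to $B_0$ and $s-1$ weight-$1$ edges to $A_0$: placing $v_j$ in $B$ cuts weight $s-1$, placing it in $A$ cuts weight $W=\Theta(s)$. Both options change the conductance of the two clusters by $O(1/n)$, and in aggregate both clusters have $\Theta(1)$ conductance either way. The ``natural'' partition is therefore neither obviously optimal nor uniquely determined, so the hypothesis of \thmref{lbmsg} --- that the algorithm recovers a constant fraction of an actual cluster, under $\Upsilon=\Omega(k^3)$ --- does not cleanly pin down what the algorithm must output. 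The paper's construction sidesteps this entirely because its clusters have conductance exactly $0$.

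Second, the final step of your reduction is not sound as stated. Under $\mu_n$ roughly a $1/4$ fraction of columns are all-ones, so $\text{\DISJN}(X)=1$ with probability $1-o(1)$; a ``constant advantage for \DISJN'' is therefore trivial and yields no lower bound via \lemref{DISJ}. What you actually need is the per-column direct-sum statement in \lemref{direct-sum}, but that lemma requires solving \ONE\ on at least a $0.8$ fraction of coordinates, whereas recovering a constant fraction of \emph{one} cluster only pins down $\text{\ONE}(X^j)$ for the $\Omega(n)$ swing vertices inside $S$ and says nothing about the remaining columns. Closing this gap would require strengthening the guarantee extracted from the clustering algorithm, or reworking the direct-sum argument. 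Finally, you cite \lemref{DISJ}, which is the \DISJN\ lower bound, but the argument you sketch is really a direct-sum argument and would need \lemref{direct-sum}.

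\paragraph{Summary.} Your idea of encoding each $\text{\ONE}(X^j)$ in the cluster membership of a dedicated vertex is in the right spirit, but the weighted swing-gadget creates exactly the ambiguity (which side does a borderline vertex belong to?) that the paper avoids by making the critical vertices isolated, so that the clusters are connected components with conductance $0$. I would recommend adopting the isolated-vertex construction rather than trying to tune the gadget weights.
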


\begin{proof}
Our proof is based on the reduction from graph clustering to the Multiparty Set-Disjointness problem~(\DISJN).
For any item $j$ and site $\mathcal{P}_i$, we set $X^j_i=0$ if item $j$ appears in site $\mathcal{P}_i$, and $X^j_i=1$ otherwise.
Then $\text{\DISJN}(X)=1$ if there is some item not appearing in any site.  Now we construct a graph $G$ based on the hard instance $X$ of 
$\mathsf{DISJ}_{n,s}$ as follows: initially,  graph $G$ consists of  $n$ isolated vertices $\ell_1,\ldots, \ell_n$, and $r$ isolated vertices $r_1,\ldots, r_s$. Then, we add an edge between $\ell_j$ and $r_i$ if item $j$ appears in site $\mathcal{P}_i$. With this construction, it is easy to see that $\text{\DISJN}(X)=0$ if every vertex $\ell_j$ is connected to some $r_i$, and $\text{\DISJN}(X)=1$ if there are some isolated vertices $\ell_j$.

We will show that, when $\text{\DISJN}(X)=0$, our constructed  graph $G$ is a bipartite expander, i.e., $G$ has only $1$ cluster.   To prove this, notice that, from the hard input distribution $\mu$ on $Y\in\{0,1\}^s$  described above, with probability $1/2$ we choose each $Y_i(i\in [s])$ to be $0$ or $1$ with equal probability. This implies that, for any $\ell_i$ and $r_j$, there is an edge between $\ell_i$ and $r_j$ independently with probability at least $1/4$. By standard results on constructing expanders, this implies $G$ is a bipartite expander with constant expansion, and in particular is connected.   

On the other side, when $\text{\DISJN}(X)=1$, every isolated vertex $\ell_j$ itself forms a cluster with conductance $0$ and constant expansion, and the giant component of $G$ forms a cluster with conductance $0$ and constant expansion (since, as argued in the previous paragraph, it is a bipartite expander). Let $k$ be the number of connected  components in graph $G$. Then, $\rho(k)=0$, and our assumption on $\Upsilon=\lambda_{k+1}(\mathcal{L}_G)/\rho(k)=\Omega(k^3)$ holds trivially. Hence any clustering algorithm  that is able to find a constant fraction of each cluster in graph $G$ satisfying $\Upsilon=\Omega(k^3)$ can be used to solve \DISJN. The lower bound on the communication complexity of graph clustering follows from the lower bound for \DISJN.  
\end{proof}

As a  remark, it is easy to see that this  lower bound also holds for constructing spectral sparsifiers: for any $n\times n$ $\mathsf{PSD}$ matrix $A$ whose entries are arbitrarily distributed among $s$ sites, any distributed algorithm that constructs a $(1+\Theta(1))$-spectral sparsifier of $A$ requires $\Omega(ns)$ bits of communication. This follows since such a spectral sparsifier can be used to solve the spectral clustering problem. Spectral sparsification has played an important role in designing fast algorithms from different areas, e.g., machine learning, and  numerical linear algebra. Hence our lower bound result for constructing spectral sparsifiers may have applications to studying other distributed learning algorithms.

\subsection{The blackboard model\label{sec:bb}}
Next we present a graph clustering algorithm with $\tilde{O}(n+s)$ bits of communication cost in the blackboard model. 
Our result is based on the observation that a spectral sparsifier preserves the structure of clusters, which was used for proving \thmref{gcss}. So it suffices to design a  distributed algorithm for constructing a spectral sparsifier in the blackboard model.

Our distributed algorithm  is based on constructing a chain of coarse sparsifiers~\cite{MP12}, which is described as follows: for any input \textsf{PSD} matrix $K$ with $\lambda_{\max}(K)\leq \lambda_u$ and all the non-zero eigenvalues of $K$  at least $\lambda_{\ell}$,   we define $d=\lceil \log_2(\lambda_u/\lambda_{\ell})\rceil$ and construct a chain of $d+1$ matrices
\begin{equation}\label{eq:chain}
[K(0),K(1),\ldots, K(d)],
\end{equation}
where $ 
\gamma(i)=\lambda_u/2^i$ and $K(i)= K+\gamma(i)I$.
Notice that in the chain above every $K(i-1)$
 is obtained by adding weights to the diagonal entries of $K(i)$, and 
 $K(i-1)$ approximates $K(i)$ as long as the weights added to the diagonal entries are  small.  
We will construct this chain recursively, so that $K(0)$ has heavy diagonal entries and can be approximated by a diagonal matrix. Moreover, since $K$ is the Laplacian matrix of a graph $G$, it is easy to see that $d=O(\log n)$ as long as the edge weights of $G$  are polynomially upper-bounded in $n$.

\begin{lem}[\cite{MP12}] \label{lem:MP12}
The chain~\eq{chain} satisfies the following relations: (1) 
 $K\preceq_r K(d)\preceq_r 2K$;
(2)  $K(\ell) \preceq K(\ell-1) \preceq 2K(\ell)$ for all $\ell\in\{1,\ldots, d\}$;
(3)  $K(0)\preceq 2\gamma(0)I\preceq 2K(0)$.
\end{lem}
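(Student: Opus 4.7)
The plan is to verify each of the three relations essentially by direct computation, exploiting the fact that $\gamma(\ell-1) = 2\gamma(\ell)$ by definition of the geometric sequence, together with the eigenvalue bounds on $K$.

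\textbf{Parts (2) and (3) -- pure PSD arithmetic.} These are the easiest and I would dispatch them first. For (2), observe that $K(\ell-1) - K(\ell) = (\gamma(\ell-1) - \gamma(\ell)) I = \gamma(\ell) I \succeq 0$, giving the left inequality. For the right inequality, since $\gamma(\ell-1) = 2\gamma(\ell)$, one computes
\[
2K(\ell) - K(\ell-1) = 2K + 2\gamma(\ell)I - K - \gamma(\ell-1)I = K \succeq 0,
\]
which is exactly what we need. For (3), use $\lambda_{\max}(K) \leq \lambda_u = \gamma(0)$, so $K \preceq \gamma(0) I$, and therefore $K(0) = K + \gamma(0)I \preceq 2\gamma(0) I$. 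The other direction is immediate from $K \succeq 0$: $2K(0) = 2K + 2\gamma(0)I \succeq 2\gamma(0)I$.

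\textbf{Part (1) -- this is the one place the $\preceq_r$ notation genuinely matters.} The right-hand inequality $K \preceq K(d)$ is free from $\gamma(d) I \succeq 0$, so I only need the upper bound, and only on vectors $x$ lying in the row span of $K$. The key observation is that on the row span of $K$ (the span of the eigenvectors associated with nonzero eigenvalues), every eigenvalue of $K$ is at least $\lambda_\ell$, so $x^{\rot} K x \geq \lambda_\ell \|x\|^2$ for all such $x$. By definition of $d = \lceil \log_2(\lambda_u / \lambda_\ell)\rceil$, we have $\gamma(d) = \lambda_u / 2^d \leq \lambda_\ell$. Combining,
\[
x^{\rot} K(d) x = x^{\rot} K x + \gamma(d)\|x\|^2 \leq x^{\rot} K x + \lambda_\ell \|x\|^2 \leq 2 x^{\rot} K x,
\]
which is exactly $K(d) \preceq_r 2K$.

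The main (mild) obstacle is just being careful in Part (1) that the upper bound requires the row-span restriction: on the full $\mathbb{R}^n$ it fails whenever $K$ has a nullspace, since then $K(d)$ is strictly positive on that subspace while $K$ vanishes. This is exactly why $\preceq_r$ is used rather than $\preceq$. Everything else is routine linear algebra using $\gamma(\ell-1) = 2\gamma(\ell)$ and $\lambda_{\max}(K) \leq \lambda_u$.
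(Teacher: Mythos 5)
The paper does not prove this lemma at all --- it is stated as a citation to \cite{MP12} (Miller and Peng) and used as a black box in the construction of the chain of coarse sparsifiers. So there is no in-paper proof to compare against. Your argument is correct, and it is essentially the canonical direct-computation proof one would write (and the one that appears in the source): parts (2) and (3) reduce to the identities $\gamma(\ell-1)=2\gamma(\ell)$ and $K\preceq\gamma(0)I$ (from $\lambda_{\max}(K)\leq\lambda_u=\gamma(0)$), while part (1) needs $\gamma(d)\leq\lambda_\ell$ (from $d=\lceil\log_2(\lambda_u/\lambda_\ell)\rceil$) combined with the fact that on $\operatorname{range}(K)$ one has $x^{\rot}Kx\geq\lambda_\ell\|x\|^2$. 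Your remark about why $\preceq_r$ rather than $\preceq$ is necessary in part (1) --- that $K(d)$ is strictly positive on $\ker(K)$ where $K$ vanishes --- is exactly right and is the one place where care is needed. No gaps.
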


%Such a chain can be constructed recursively so that $K(0)$ has heavy diagonal entries, and can be approximated by a diagonal matrix. Moreover, it is easy to prove that $d=O(\log n)$ if $K$ is the Laplacian matrix of a graph $G$, provided $G$'s edge weights are polynomially bounded.

Based on \lemref{MP12}, we will construct a chain of matrices  
\begin{equation}\label{eq:aptchain}
\left[ \tilde{K}(0), \tilde{K}(1),\ldots, \tilde{K}(d) \right]
\end{equation}
in the blackboard model, 
such that every $\tilde{K}(\ell)$ is a spectral sparsifier of $K(\ell)$, and 
every $\tilde{K}(\ell+1)$ can be constructed from $\tilde{K}(\ell)$.  The basic idea behind our construction is to use the relations among different $K(\ell)$ shown in \lemref{MP12} and the fact that, for any $K=B^{\rot}B$, sampling rows of $B$ with respect to their leverage scores can be used to obtain a matrix approximating $K$.

%It is well known that, for  $K(\ell)\triangleq B^{\rot}B$,  sampling rows $b_i$ of $B$ according to their leverage scores, denoted by$\tau_i\triangleq b_i^{\rot}K^{+}b_i$, will give a matrix approximating $K$. Formally, we assume that $\tilde{\tau}$ is the vector of leverage score overestimate for $B$'s rows such that $\tilde{\tau}_i\geq \tau_i$ for all $i\in[m]$, $0<\varepsilon<1$, and $c$ is a fixed constant. We sample every row $b_i$ with  probability $p_i=\min\left\{1, c\varepsilon^{-2}\tilde{\tau}_i\log n\right\}$, and define a diagonal matrix $W$ with $W_{i,i}=\frac{1}{p_i}$ with probability $p_i$, and $W(i,i)=0$ otherwise. Then, it holds with high probability that $(1-\varepsilon)K(\ell)\preceq \tilde{K}(\ell)=B^{\rot}WB \preceq (1+\varepsilon)K(\ell)$.Moreover, $W$ has $O\left( \|\tilde{\tau} \|_1\varepsilon^{-2}\log n \right)$ non-zeros with high probability. We prove in  Appendix~\ref{sec:b3} that the above sampling procedure can be implemented in the blackboard model, and this gives the following result:

\begin{thm}\label{thm:gcblackboard}
Let $G$ be an undirected graph on $n$ vertices, where the edges of $G$ are allocated among $s$ sites, and the edge weights are polynomially upper bounded in $n$. Then, a spectral sparsifier of $G$ can be constructed with $\tilde{O}(n+s)$ bits of communication in the blackboard model. That is, the chain \eq{aptchain} can be constructed with $\tilde{O}(n+s)$ bits of communication in the blackboard model.
\end{thm}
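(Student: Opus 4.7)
The plan is to implement, in the blackboard model, the iterative chain-based sparsifier construction underlying \lemref{MP12}. I would set $K=L_G$, pick $\lambda_u$ as a polynomial-in-$n$ upper bound on $\lambda_{\max}(L_G)$ (the maximum weighted degree, which is polynomial by the edge-weight assumption) and $\lambda_\ell$ as the analogous polynomial-in-$n$ lower bound on its smallest nonzero eigenvalue, so that $d=O(\log n)$; these bounds can be hard-coded into the protocol without any communication. The construction proceeds round by round, producing $\tilde K(0),\tilde K(1),\ldots,\tilde K(d)$ with each $\tilde K(\ell)$ a $(1\pm\varepsilon_\ell)$-spectral sparsifier of $K(\ell)=K+\gamma(\ell)I$. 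The base case is immediate: by part (3) of \lemref{MP12}, the diagonal matrix $2\gamma(0)I$ is a constant-factor spectral approximation of $K(0)$, and it is known to every site at no communication cost.

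For the inductive step, suppose $\tilde K(\ell)$ has already been posted on the blackboard. Its description has size $\tilde O(n)$ bits, since a spectral sparsifier contains $\tilde O(n)$ reweighted edges. Part (2) of \lemref{MP12} gives $K(\ell+1)\preceq K(\ell)\preceq 2K(\ell+1)$, so $\tilde K(\ell)$ is also a constant-factor spectral approximation of $K(\ell+1)$. Factor $K(\ell+1)=\tilde B_{\ell+1}^\rot\tilde B_{\ell+1}$ by appending the $n$ rows $\sqrt{\gamma(\ell+1)}\,e_v^\rot$ to the weighted edge-vertex incidence matrix. For each edge $e$ that it holds, site $\mathcal{P}_i$ locally computes the approximate leverage score $\tilde\tau_e=b_e^\rot\tilde K(\ell)^{-1}b_e$; this requires no communication since $\tilde K(\ell)^{-1}$ is determined by the public blackboard. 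Each site then independently performs Bernoulli sampling on its own edges with probabilities $\min\{1,c\tilde\tau_e\log n/\varepsilon_{\ell+1}^2\}$, rescales the surviving edges appropriately, and posts them to the blackboard. The identity rows are deterministic and can be accounted for either by one designated site or as part of the shared representation.

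The per-round cost is thus $\tilde O(n)$ bits for posting the $\tilde O(n)$ sampled edges that define $\tilde K(\ell+1)$, plus at most $O(s)$ bits of scheduling overhead so that the sites speak in a fixed order determined by the blackboard. Summed over $d=O(\log n)$ rounds this yields the desired $\tilde O(n+s)$ total. At the end, part (1) of \lemref{MP12} combined with $\tilde K(d)$ being a spectral approximation of $K(d)$ shows that the graph portion of $\tilde K(d)$ is a spectral sparsifier of $L_G$ on its row span, as required.

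The hard part will be verifying that leverage-score sampling with scores derived from $\tilde K(\ell)^{-1}$ rather than $K(\ell+1)^{-1}$ still produces a valid $(1\pm\varepsilon_{\ell+1})$-sparsifier of $K(\ell+1)$. I would invoke a matrix Chernoff bound together with the facts that (i) $\tilde K(\ell)$ is a constant-factor approximation of $K(\ell+1)$, which guarantees $\tilde\tau_e$ overestimates the true leverage score by at most a constant factor, and (ii) the sum of leverage scores equals the rank $\le n$, which bounds the expected sample size by $\tilde O(n)$. To keep the accumulated error under $\varepsilon$ across all $d$ levels, I would take $\varepsilon_\ell=\Theta(\varepsilon/d)$, which only affects the bound by polylogarithmic factors absorbed into $\tilde O$. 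The subtle feature unique to the blackboard setting is that Bernoulli sampling is independent across edges and hence across sites, so the distributed sampling is distributionally identical to centralized sampling; this is precisely where the broadcast channel saves a factor of $s$ compared with the point-to-point model, in which each site would otherwise need to individually transmit $\tilde K(\ell)$-sized information to the coordinator in each of the $d$ rounds.
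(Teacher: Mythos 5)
Your proposal follows essentially the same approach as the paper's proof: build the Miller--Peng chain of coarse sparsifiers round by round, use the publicly posted $\tilde K(\ell)$ to locally compute approximate leverage scores at each site, and broadcast only the sampled edges, giving $\tilde O(n+s)$ bits per round over $d = O(\log n)$ rounds (your taking $\tilde K(0)$ to be the diagonal itself rather than the paper's sampled sparsifier of $K(0)$ is a harmless reindexing of the chain). One minor inefficiency: the reduction $\varepsilon_\ell = \Theta(\varepsilon/d)$ is unnecessary, because the accuracy of each round's sparsifier is governed by that round's sampling parameter alone --- the previous $\tilde K(\ell)$ need only supply constant-factor overestimates of the leverage scores of $K(\ell+1)$, so there is no multiplicative accumulation of error across rounds; as you yourself note, this costs only polylogarithmic factors absorbed into $\tilde O$.
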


\begin{proof} Let $K= B^{\rot}B$  be the Laplacian matrix of the underlying graph $G$, where $B\in\mathbb{R}^{m\times n}$ is the  edge-vertex incidence matrix of $G$. We will  prove  that 
 every $\tilde{K}(i+1)$ can be constructed based on $\tilde{K}(i)$ with  $\tilde{O}(n+s)$ bits of communication. This implies that $\tilde{K}(d)$, a $(1+\varepsilon)$-spectral sparsifier of $K$, can be constructed with $\tilde{O}(n+s)$ bits of communication, as the length of the chain $d=O(\log n)$.

 First of all, notice that  $\lambda_u\leq 2n$, and the value of $n$ can be obtained with communication cost $\tilde{O}(n+s)$~(different sites sequentially write the new IDs of the vertices on the blackboard). In the following  we assume that $\lambda_u$ is the upper bound of $\lambda_{\max}$ that we actually obtained in the blackboard.
 
\emph{Base case of $\ell=0$:} 
 By definition, $K(0)=K+\lambda_u\cdot  I$, and 
$
 \frac{1}{2}\cdot K(0)\preceq \gamma(0)\cdot I \preceq K(0)$, 
due to Statement~3 of  \lemref{MP12}.  Let $\oplus$ denote appending the rows of one matrix to another. We define $B_{\gamma(0)}=B\oplus \sqrt{\gamma(0)}\cdot I$, and write $K(0)=K+\gamma(0)\cdot I=B^{\rot}_{\gamma(0)}B_{\gamma(0)}$. By defining $\tau_i=b_i^{\rot}\left(K(0)\right)^{\rot}b_i$ for each row of $B_{\gamma(0)}$, we have
$
\tau_i\leq b_i^{\rot}\left(\gamma(0)\cdot I\right) b_i\leq 2\cdot \tau_i$.
Let $\tilde{\tau}_i=b_i^{\rot}\left( \gamma(0)\cdot I \right)^{+}b_i$ be the leverage score of $b_i$ approximated using $\gamma(0)\cdot I$, and let $\tilde{\tau}$ be the vector of approximate leverage scores, with the leverage scores of the $n$ rows corresponding to $\sqrt{\gamma(0)}\cdot I$ rounded up to 1. Then, with high probability sampling $O(\varepsilon^{-2}n\log n)$ rows of $B$ will give a matrix $\tilde{K}(0)$ such that $
(1-\varepsilon)K(0)\preceq \tilde{K}(0)\preceq (1+\varepsilon)K(0)$.
Notice that, as every row of $B$ corresponds to an edge of $G$, the approximate leverage scores $\tilde{\tau}_i$ for different edges can be computed locally by different sites maintaining the edges, and the sites only need to send the information of the sampled edges to the blackboard, hence the communication cost is $\tilde{O}(n+s)$ bits.

\emph{Induction step:} We assume that 
$
(1-\varepsilon)K(\ell)\preceq_r\tilde{K}(\ell) \preceq_r (1+\varepsilon)K(\ell)$,
and the blackboard maintains the matrix $\tilde{K}(\ell)$. This implies that
$
(1-\varepsilon)/(1+\varepsilon)\cdot K(\ell)\preceq_r 1/(1+\varepsilon)\cdot \tilde{K}(\ell) \preceq_r K(\ell).
$
Combining this with Statement~2 of \lemref{MP12}, we have that 
\[
\frac{1-\varepsilon}{2(1+\varepsilon)}K(\ell+1)\preceq_r \frac{1}{2(1+\varepsilon)}\tilde{K}(\ell) \preceq K(\ell+1).
\]
We apply the same sampling procedure as in the base case, and obtain a matrix $\tilde{K}(\ell+1)$ such that $
(1-\varepsilon) K(\ell+1) \preceq_r\tilde{K}(\ell+1) \preceq_r (1+\varepsilon) K(\ell+1)$.
Notice that, since $\tilde{K}(\ell)$ is written on  the blackboard, the probabilities used for sampling individual edges can be computed locally by different sites, and in each round only the sampled edges will be sent to the blackboard in order for the blackboard to obtain $\tilde{K}(\ell+1)$. Hence, the total communication cost in each iteration is $\tilde{O}(n+s)$ bits. Combining this with the fact that the chain length $d=O(\log n)$ proves the theorem. 
\end{proof}

Combining \thmref{gcblackboard} and the fact that a spectral sparsifier preserves the structure of clusters, 
we obtain a distributed  algorithm in the blackboard model with total communication cost $\tilde{O}(n+s)$ bits, and the performance of our algorithm is the same as in the statement of \thmref{gcss}. Notice that $\Omega(n+s)$ bits of communication are needed for graph clustering in the blackboard model, since the output of a clustering algorithm contains $\Omega(n)$ bits of information and each site needs to communicate at least one bit. Hence the communication cost of our proposed algorithm is optimal up to a poly-logarithmic factor.

\section{Distributed geometric clustering\label{sec:geo}}
We now consider geometric clustering, including $k$-median, $k$-means and $k$-center.  Let $P$ be a set of points of size $n$ in a metric space with distance function $d(\cdot, \cdot)$, and let $k \le n$ be an integer. In the $k$-center problem we want to find a set $C\ (|C| = k)$ such that $\max_{p \in P} d(p, C)$ is minimized, where $d(p, C) = \min_{c \in C} d(p, c)$.  In $k$-median and $k$-means we replace the objective function $\max_{p \in P} d(p, C)$ with $\sum_{p \in P} d(p, C)$ and $\sum_{p \in P} (d(p, C))^2$, respectively.
%We will assume that the input consists of points in the Euclidean space.

\subsection{The message passing model}
\label{sec:geo-mp}

As mentioned, for constant dimensional Euclidean space and a constant $c > 1$, there are algorithms that $c$-approximate $k$-median and $k$-means using $\tilde{O}(sk)$ bits of communication~\cite{BEL13}.  For $k$-center, the folklore parallel guessing algorithms (see, e.g., \cite{CMZ07}) achieve a $2.01$-approximation using $\tilde{O}(sk)$ bits of communication.  

The following theorem states that the above upper bounds are tight up to logarithmic factors. % Due to space constraints we defer the proof to the full version of this paper. %Appendix~\ref{app:proof-geometric}.  
The proof uses tools from multiparty communication complexity.  We in fact can prove a stronger statement that any algorithm that can differentiate whether we have $k$ points or $k+1$ points in total in the message passing model needs $\Omega(sk)$ bits of communication.

\begin{thm}
\label{thm:geometric}
For any $c > 1$, computing $c$-approximation for $k$-median, $k$-means or $k$-center correctly with probability $0.99$ in the message passing model needs $\Omega(sk)$ bits of communication.
\end{thm}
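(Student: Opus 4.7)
My plan is to reduce from the multiparty set-disjointness problem $\text{\DISJN}$ on universe $[k+1]$, whose communication complexity is $\Omega(s(k+1))=\Omega(sk)$ by \thmref{DISJ}. As the authors remark, it suffices to show that merely distinguishing whether the global input contains $k$ or $k+1$ distinct points takes $\Omega(sk)$ bits: the three approximation bounds then follow because in the former case the optimal $k$-clustering cost is $0$, while in the latter any set of $k$ centers must leave at least one of the $k+1$ pairwise-separated points uncovered, giving $\OPT$ bounded below by an absolute positive constant. Hence any $c$-approximation with finite $c$ separates the two cases simultaneously for $k$-median, $k$-means and $k$-center.

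For the reduction I would fix $k+1$ public locations $\ell_1,\ldots,\ell_{k+1}\in\mathbb{R}^2$ at pairwise distance at least $1$; given an instance $X\in\{0,1\}^{s\times(k+1)}$ of $\text{\DISJN}$ drawn from $\mu_{k+1}$ with row $X_i$ held at site $\mathcal{P}_i$, each site locally places a single point at $\ell_j$ exactly when $X_i^j=0$. The encoding uses zero communication. The point $\ell_j$ then appears in the combined dataset iff some site has $X_i^j=0$, i.e., iff $\text{\ONE}(X^j)=0$; therefore $\text{\DISJN}(X)=0$ produces exactly $k+1$ distinct points and $\text{\DISJN}(X)=1$ produces at most $k$. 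The coordinator recovers $\text{\DISJN}(X)$ from the approximate cost reported by the $c$-approximation protocol, which is $0$ in the second case and strictly positive in the first; a $0.99$-success clustering protocol thereby yields a $0.01$-error protocol for $\text{\DISJN}$, and \thmref{DISJ} combined with Yao's minimax principle forces its communication to be $\Omega(s(k+1))=\Omega(sk)$ bits.

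I expect the chief technical obstacle to be making the ``coordinator recovers the cost'' step cheap enough not to dominate the intended lower bound. A na\"ive post-processing that broadcasts the $k$ returned centers to all $s$ sites so each can locally verify zero cost would require $\Theta(sk\log(1/\eps))$ bits, exactly the order of the lower bound we are trying to prove and hence useless. I plan to sidestep this by using the standard convention that an approximation algorithm outputs a cost estimate along with its solution (certifying its $c$-approximation guarantee), or equivalently by augmenting the coordinator's terminal output with a single bit ``cost $=0$'' derivable from the algorithm's internal certificate. The same geometric construction then handles all three objectives uniformly because the distinguishing gap is qualitative rather than multiplicative, and the imbalance of $\mu_{k+1}$ toward $\text{\DISJN}=1$ inputs is irrelevant because the information-complexity lower bound of \thmref{DISJ} is stated with respect to the entire distribution.
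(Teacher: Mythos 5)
Your proposal is correct and follows essentially the same route as the paper: reduce from multiparty set disjointness so that the DISJ instance determines whether the union of all sites' points occupies exactly $k+1$ or at most $k$ distinct locations, then observe that any $c$-approximate cost for $k$-median/means/center separates cost $0$ from cost bounded away from $0$. Your single-location-per-coordinate encoding (populate $\ell_j$ iff some $X_i^j=0$, over a universe of size $k+1$) is a slightly cleaner variant of the paper's pairing $(p^j,q^j)$ over $\ell=(k+1)/2$ coordinates, but the key lemma (\thmref{DISJ}), the reduction idea, and the resulting $\Omega(sk)$ bound are the same.
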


\begin{proof}
We can in fact prove a more general result: we can show that the $\Omega(sk)$ lower bound holds for any {\em eligible} function which evaluates to $0$ if there are at most $k$ points, and evaluates to a value greater than $0$ if there are at least $k+1$ points.  Note that $k$-median, $k$-means and $k$-center are all eligible functions.  We prove this by a simple reduction from \DISJL\ where $\ell = (k+1)/2$ (w.l.o.g., assuming $k$ is odd). 

The reduction is as follows. Given an $s$-player set-disjointness instance of size $\ell$ (i.e., \DISJL), let $X_i = (X_i^1, \ldots, X_i^\ell)$ be the $i$-th row of the input matrix $X$.  Let $p^1, \ldots, p^\ell$ and $q^1, \ldots, q^\ell$ be $2\ell$ distinct point locations on a line under Euclidean distance.  Each site $i$ does the following: for each coordinate $j$, if $X_i^j = 0$ then it put a point $u_i^j$ at location $q^j$; otherwise if $X_i^j = 1$ it put a point at location $p^j$.  It is easy to see that $\mathsf{DISJ}_{s,\ell} = 1$ if and only if the number of distinct points in $\bigcup_{i \in [s], j \in [\ell]} u_i^j$ is $2(\ell - 1) + 1 = k$; and $\mathsf{DISJ}_{s,\ell} = 1$ if and only if the number of distinct points in $\bigcup_{i \in [s], j \in [\ell]} u_i^j$ is $2(\ell - 1) + 2 = k  + 1$.  The lower bound follows from the definition of eligible function and Theorem~\ref{thm:DISJ}.
\end{proof}

A number of works on clustering consider {\em bicriteria} solutions (e.g., \cite{KPR98,CKMN01}). An algorithm is a $(c_1, c_2)$-approximation $(c_1, c_2 > 1)$ if the optimal solution costs $W$ when using $k$ centers, then the output of the algorithm costs at most $c_1 W$ when using at most $c_2 k$ centers.  We can show that for $k$-median and $k$-means, the $\Omega(sk)$ lower bound  holds even for algorithms with bicriteria approximations.  
\begin{thm}
\label{thm:bicriteria}

For any $c \in [1, 1.01]$, computing $(7.1 - 6c, c)$-bicriteria-approximation for $k$-median or $k$-means  correctly with probability $0.99$ in the message passing model needs $\Omega(sk)$ bits of communication.
\end{thm}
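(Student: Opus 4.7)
I would adapt the reduction from \DISJL\ that proves Theorem~\ref{thm:geometric}, but the single-instance reduction breaks for any $c>1$ because the one extra location it creates is absorbed by a single slack center. The fix is a direct-sum construction: reduce from $L=\Theta(k)$ independent copies of \DISJL\ with $\ell=O(1)$ coordinates each, embedding each copy into its own cell of a widely-spaced grid in Euclidean space so that an optimal clustering never serves two cells with the same center. Inside a cell, encode the copy's disjointness bit by a local ``gadget'' of $O(1)$ points whose one-center cost is strictly larger than its two-center cost; concretely, I would use a two-point pair of diameter $\delta$ in the disjoint case and a slightly stretched three-point configuration of the same diameter in the non-disjoint case, both with a common point-multiplicity $N$ to rescale the numerical cost.

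With $k=L$ the algorithm naturally receives one center per cell, while a $(c',c)$-bicriteria algorithm may distribute $(c-1)k$ extras across cells. A per-gadget case analysis together with an exchange argument shows that the adversary's best allocation is to fully resolve as many non-disjoint cells as possible before touching disjoint cells, because the per-extra-center savings are larger on the non-disjoint gadget. Summing the resulting cell-wise costs gives a closed-form achievable $ck$-center cost for each global configuration, and dividing by the corresponding $k$-center OPT produces a bicriteria gap that remains above $1$ throughout the stated range. The specific constant $7.1-6c$ is engineered by tuning the geometry of the non-disjoint gadget so that the per-center cost savings depend linearly on $c$ on $[1,1.01]$; outside this narrow window the linear fit breaks, which is exactly why the theorem restricts $c\le 1.01$. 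The same gadget reasoning carries over to $k$-means after squaring all distances in the per-gadget analysis.

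To convert the clustering gap into the $\Omega(sk)$ communication lower bound I would invoke a standard direct-sum argument in information complexity. If a protocol correctly outputs a $(7.1-6c,c)$-bicriteria clustering with constant probability, then by a Markov-type averaging argument it must correctly identify the disjointness bit in a constant fraction of the $L$ cells. Each such cell is an independent \DISJL\ instance, for which Theorem~\ref{thm:DISJ} supplies $\Omega(s)$ bits of information complexity on the product hard distribution $\mu_\ell$; summing over the $\Omega(L)=\Omega(k)$ cells and applying Yao's minimax yields the claimed bound. The main obstacle will be a careful implementation of the direct-sum step: one must argue that any constant-factor-accurate estimate of the total clustering cost pins down a constant fraction of the per-cell cost contributions, which in turn requires the gadget geometry to make the per-cell contributions approximately additive and approximately independent across cells under the hard distribution.
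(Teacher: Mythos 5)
Your high-level plan — reduce from $\mathsf{DISJ}$, do a per-coordinate direct-sum to show that any protocol with small communication must be wrong on a constant fraction of coordinates, and account for how much the extra $(c-1)k$ centers can help — is exactly the paper's plan, and you even found the right direct-sum statement (Lemma~\ref{lem:direct-sum}). But the construction you sketch has a genuine gap that would prevent the lower bound from going through at all, let alone with the right constant.

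In your ``widely-spaced cells'' construction, each cell is a standalone $O(1)$-point gadget, and $\OPT$ gets exactly one center per cell. With one center per cell, $\OPT$ in a non-disjoint cell pays the one-center cost of that gadget; but a protocol that does \emph{zero} communication and simply places one center per cell pays exactly the same amount. So $\SOL = \OPT$ is achievable with no information about the disjointness bits whatsoever, and the ratio is $1$, not $7.1-6c > 1$. Adding more (non-disjoint) points in a cell increases both $\OPT$ and $\SOL$ together; the extra $(c-1)k$ centers only make $\SOL$ even smaller. The ``exchange argument'' you invoke — that the adversary's best allocation of extras is to the non-disjoint cells — can only help the algorithm, never hurt it, so it does not create a cost gap between informed and uninformed solutions. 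Your two-point/three-point gadget with diameter $\delta$ does not encode a \emph{penalty} for guessing wrong; it just encodes a cost both sides pay.

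What the paper's construction supplies, and what your cells are missing, are two ingredients that together create the per-coordinate $1/2$ \emph{penalty} for a wrong guess. First, it places infinite-weight points at $4\ell$ ``anchor'' locations, which consume $2k/3$ of the center budget for both $\OPT$ and $\SOL$, leaving only $2\ell$ (resp.\ $2c\ell + \text{extras}$) centers to be distributed among the interesting locations. Second, it introduces $2\ell$ ``dummy'' locations of weight $1/2$, so that the remaining budget faces a genuine allocation dilemma: a center at $q^j$ is worth $\ge 1$ if $q^j$ is occupied but only displaces a dummy worth $1/2$ if it is not, yielding a clean $\pm 1/2$ loss per mis-guessed coordinate. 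This makes the total cost approximately \emph{additive over coordinates} with a fixed per-mistake penalty, which is precisely the additivity you correctly flagged as ``the main obstacle'' at the end of your proposal — but your cell gadget never achieves it. The $\Omega(sk)$ bound then follows from Lemma~\ref{lem:direct-sum}, as you anticipated. The fix to your proposal is not to add more cells or tune gadget geometry; it is to introduce the forced anchors and the fractional-weight dummies so that a wrong bit is penalized rather than merely priced identically into $\OPT$ and $\SOL$.
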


%\subsection{Proof of Theorem~\ref{thm:bicriteria}}
\label{app:proof-bicriteria}
%\he{To Qin: I copied everything you wrote from the appendix of our NIPS submission here. It seems unusual to put the lemmas/theorems inside a proof. Can you make some changes here?}
Before proving Theorem \ref{thm:bicriteria}, we first show the following technical lemma.  
\begin{lem}
\label{lem:direct-sum}
In the message-passing model, $\Omega(s \ell)$ bits of communication is needed for computing at least a $0.8$ fraction of $j \in [\ell]$ \ONE$(X^j)$ correctly with probability $0.99$ under the input distribution $X \sim \mu_\ell$.
\end{lem}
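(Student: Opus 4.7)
The plan is to prove the lemma by a direct-sum reduction from single-instance \ONE\ under $\nu$ to the $0.8$-fraction problem under $\mu_\ell = \nu^\ell$. Concretely, given any protocol $\Pi$ that solves the fraction problem with total communication $T$ and error $0.01$ under $\mu_\ell$, I will construct a protocol $\Pi'$ for single-instance \ONE\ with information cost $O(T/\ell)$ and distributional error at most $0.49$ under $\nu$. Combined with $\mathsf{IC}_{0.49,\nu}(\text{\ONE}) = \Omega(s)$ from Theorem~\ref{thm:DISJ} and the inequality $\mathsf{R}_\delta(\mathcal{A}) \ge \mathsf{IC}_{\delta,\mu}(\mathcal{A})$, this forces $T = \Omega(s\ell)$.

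The reduction uses the standard random embedding: on input $Y \sim \nu$, the sites use public randomness to pick $j^{\ast} \in [\ell]$ uniformly and to sample $X^j \sim \nu$ independently for each $j \ne j^{\ast}$, and they place $Y$ at coordinate $j^{\ast}$. The resulting matrix has distribution exactly $\mu_\ell$, the sites simulate $\Pi$ on it, and the coordinator returns $\Pi$'s output bit at coordinate $j^{\ast}$. Because $\Pi$ has at least $0.8\ell$ correct answers with probability at least $0.99$, averaging over the uniform $j^{\ast}$ while conditioning on this event shows that the returned bit agrees with the true value of \ONE\ at $Y$ with probability at least $0.8$; unconditionally, the error of $\Pi'$ is therefore at most $1 - 0.99 \cdot 0.8 = 0.208 < 0.49$.

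For the information-cost bound, the key identity is the chain rule applied to the independent columns of $\mu_\ell$: for each site $i$,
\[
I(X;\Pi_i) \;=\; \sum_{j=1}^{\ell} I\bigl(X^j;\Pi_i \,\big|\, X^{<j}\bigr).
\]
If the embedding is arranged so that, conditioning on the public randomness of $\Pi'$, site $i$'s transcript leaks information about $Y = X^{j^{\ast}}$ only through the term $I(X^{j^{\ast}};\Pi_i \mid X^{<j^{\ast}})$, then averaging over $j^{\ast}$ yields
\[
\mathsf{IC}(\Pi') \;=\; \frac{1}{\ell}\sum_i\sum_{j=1}^{\ell} I\bigl(X^j;\Pi_i \,\big|\, X^{<j}\bigr) \;=\; \frac{1}{\ell}\,\mathsf{IC}(\Pi) \;\le\; T/\ell.
\]
This is the multiparty analogue of the standard direct-sum identity used in the proof of $\mathsf{IC}_{0.49,\nu}(\text{\DISJN}) = \Omega(sn)$ in \cite{beopv13}.

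The main obstacle will be implementing the embedding's sampling of the columns after $j^{\ast}$ in a way compatible with the direct-sum identity. Because $\nu$ has inter-site correlations---its all-one and almost-all-one branches together carry probability $1/2$---each column cannot be produced by purely private coins at each site, so a BYJKS-style decomposition is needed, treating the ``type'' of each column (and the location of the zero, when applicable) as public coins while the iid-uniform branch is simulated privately. One then has to verify that the additional conditioning on these public coins inflates the mutual-information terms by only lower-order additive amounts compared with the per-coordinate $\Omega(s)$ lower bound. Once this direct-sum identity is in place, the rest is mechanical: Yao's minimax converts randomised to distributional error, Theorem~\ref{thm:DISJ} supplies the $\Omega(s)$ lower bound on $\mathsf{IC}(\Pi')$, and the contradiction yields $T = \Omega(s\ell)$.
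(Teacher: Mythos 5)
Your proposal and the paper's proof are two presentations of the same direct-sum argument over the $\ell$ independent columns of $\mu_\ell$, both ultimately invoking the per-coordinate bound $\mathsf{IC}_{\cdot,\nu}(\text{\ONE}) = \Omega(s)$ from \thmref{DISJ}. The difference is in how the single-coordinate bound is extracted. The paper first applies Markov to isolate $\Omega(\ell)$ ``good'' coordinates $j$ on which the protocol errs with probability at most $0.24$, and then works directly with the original protocol's information via superadditivity, $I(X;\Pi) = \sum_j I(X^j;\Pi\mid X^{<j}) \ge \sum_j I(X^j;\Pi) \ge \sum_{\text{good } j} I(X^j;\Pi) \ge \Omega(\ell)\cdot \mathsf{IC}_{0.24,\nu}(\text{\ONE})$, never explicitly constructing a derived protocol. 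You instead construct a derived protocol $\Pi'$ for a single \ONE\ instance by embedding $Y$ at a uniformly random column $j^*$ and sampling the remaining columns, and argue that $\Pi'$ has error $0.208 < 0.49$ and information cost at most $T/\ell$. This is exactly the constructive ``contrapositive'' of the paper's per-coordinate step $I(X^j;\Pi) \ge \mathsf{IC}(\text{\ONE})$, so the two are the same idea in disguise. Your route buys an explicit reduction but pays for it twice: first, the columns $j \ne j^*$ must be sampled jointly, and since $\nu$ is not a product distribution across sites you correctly observe this requires a BYJKS-style public/private split together with a check that the public conditioning matches the information-cost definition --- the paper's decomposition keeps the analysis inside the original protocol and so the same issue is only implicit, hidden in the claim $I(X^j;\Pi)\ge\mathsf{IC}_{0.24,\nu}(\text{\ONE})$; second, because the paper's $\mathsf{IC}_{\delta,\nu}$ ranges over deterministic $(\delta,\nu)$-error protocols, your averaged error and averaged information cost over the public coin $j^*$ need one more Markov step to fix a single deterministic protocol with both quantities simultaneously controlled, whereas the paper's fixed-good-$j$ formulation avoids this entirely. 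Both expositions leave the public-coin bookkeeping informal; the paper's decomposition is simply the leaner way to package the argument.
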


\begin{proof}
By a Markov inequality, there must exist $\Omega(s)$ coordinates $j$ such that the algorithm computes \ONE$(X^j)\ (X^j \sim \nu)$  with error probability at most $0.24$. Call each of these coordinates $j$ {\em good}. Let $\Pi$ be the protocol transcript. We have
\begin{eqnarray*}
I(X; \Pi) &=& \sum_{j \in [\ell]} I(X^j; \Pi\ |\ X^{-j}) \\
&\ge& \sum_{j \in [\ell]} I(X^j; \Pi) \quad (X^j \text{ and } X^{-j} \text{ are independent})\\
&\ge& \sum_{\text{good } j} I(X^j; \Pi)  \\
&\ge& \Omega(s) \cdot \mathsf{IC}_{0.24, \nu}(\text{\ONE})\\
&\ge& \Omega(s \ell). \quad \quad (\text{Theorem}\ \ref{thm:DISJ})
\end{eqnarray*}
\end{proof}

%We again prove by a reduction from the $s$-player set-disjointness, but the arguments are more involved.   Our proof holds for both $k$-median and $k$-means.

Now we are ready to prove the theorem.
%{\bf The reduction.}  
\begin{proof}[Proof of  \thmref{bicriteria}]
We consider $8 \ell$ point locations on a line (under Euclidean distance) with $x$-coordinates being $1, 2, \ldots, 8 \ell$. We put a point with {\em infinite} weight at every even point location. We name the $4 \ell$ odd point locations from left to right as 
$$p^1, q^1, p^2, q^2, \ldots, p^{\ell}, q^{\ell}, z^1, z^2, \ldots, z^{2\ell}.$$  
For each site $i \in [s]$ and each column $j \in [\ell]$, if $X_i^j = 0$ then we put a point with weight $1$ at location $q^j$; otherwise if $X_i^j = 1$ then we put a point with weight $1$ at location $p^j$.  We also put a point with weight $1/2$ at each of the ``dummy'' locations $z^1, \ldots, z^{2\ell}$.  Let the weight of a {\em location} be the sum of the weights of points falling into that location.

Given such an input $X$, for both $k$-median and $k$-means, the optimal solution ($\OPT$) which is allowed to use $k = 6\ell$ centers will include all locations $p^j$ and $q^j$ whose weights are at least $1$ (note that there are {\em at most} $2\ell$ such locations), the $4 \ell$ even point locations, and as many as dummy locations that it can still include.  The cost of the optimal solution will be precisely the cost of linking the points in the rest of the dummy locations to their nearest centers (at the even locations), which can be written as
$$\OPT = 1/2 \cdot (k/3 - (k/3 - F_0)) = 1/2 \cdot F_0 \le \ell,$$
where $F_0$ is the number of locations in $\{p^1, q^1, \ldots, p^\ell, q^\ell\}$ that have weights at least $1$.

Now suppose our solution ($\SOL$) outputs $c k$ centers for a constant $c \in [1, 1.01]$. Each time we include a  location $q^j$ as a center when there is no $0$-coordinate in the input column $X^j$, we have a loss of $1/2$ since we miss out on including a dummy location (i.e., we can take one more dummy location instead of taking $q^j$ as a center). Similarly, each time we do not include a location $q^j$ as a center when there is a $0$-coordinate in $X^j$, we have a loss of $1/2$ since a point at $q^j$ has weight at least $1$ but a point at a dummy location has weight at $1/2$. Therefore, even if we are allowed to output $c k$ medians, we will still need to figure out whether there is any point at location $q^j$ for at least an $\alpha = 0.9$ fraction of the coordinates $j \in [\ell]$. If not, then
\begin{eqnarray}
\SOL - \OPT &\ge& {1}/{2} \cdot (1 - \alpha) \ell - 1 \cdot (c - 1) k \label{eq:a-1} \ \ \ \ \\
&= & \frac{(1 - \alpha) - 12(c - 1)}{2} \cdot \ell \nonumber \\
&\ge& (6.1 - 6c) \OPT, \nonumber
\end{eqnarray}
where the first term in the RHS of (\ref{eq:a-1}) counts the loss of incorrectly computing the (at least) $(1-\alpha) \ell$ coordinates $j \in [\ell]$, and the second term counts the maximum gain of the extra $(c - 1) k$ centers $\SOL$ can use (compared with $\OPT$). 

By Lemma~\ref{lem:direct-sum}, we have that for any $c \in [1, 1.01]$, computing $(7.1 - 6c, c)$-bicriteria-approximation for $k$-median or $k$-means in the message passing model correctly with probability $0.9$ under distribution $X \sim \mu$ needs $\Omega(sk)$ bits of communication.  The theorem follows by Yao's minimax principle.
\end{proof}

\subsection{The blackboard model}
\label{sec:geo-bb}
%In Appendix~\ref{app:alg-geometric} 
We can show that there is an algorithm that achieves an $O(1)$-approximation using $\tilde{O}(s + k)$ bits of communication for $k$-median and $k$-means.  %Due to space constraints we defer the description of the algorithm to the full version of this paper.
For $k$-center, it is straightforward to implement the parallel guessing algorithm in the blackboard model using $\tilde{O}(s + k)$ bits of communication.

%\begin{thm}
%\label{thm:geometric-bb}
%There are algorithms that compute $O(1)$-approximations for $k$-median, $k$-means and $k$-center correctly with probability $0.9$ in the blackboard model using $\tilde{O}(s+k)$ bits of communication.
%\end{thm}
%\he{we should write the following discussion as a proof. Currently there is no proof of Theorem 4.4. We should also add some discussion on the lower bound in the blackboard model.}

%\he{The following is from the appendix of our NIPS submission. I did not make any changes.}

Our algorithm for $k$-median/means is an easy adaptation of the {\em successive sampling} algorithm proposed by Mettu and Plaxton~\cite{MP04} in the (centralized) RAM model.  We first summarize their algorithm and then describe how to port it to the blackboard model.

Let $X_1, \ldots X_s$ be the point sets at sites $\mathcal{P}_1, \ldots, \mathcal{P}_k$ respectively. The successive sampling algorithm proceeds in rounds. At each round $j$ it does the following:
\begin{enumerate}
\item $s$ sites jointly sample $O(k)$ point centers, denoted by $Y_j$;

\item $s$ sites grow balls from each of the point centers in $Y_j$ synchronously until a time step when a $0.9$ fraction of points in $\bigcup_{i \in [s]} X_i$ are covered;

\item each site $\mathcal{P}_i$ updates $X_j$ by removing those points that are covered by any of the balls centered at points in $Y_j$;

\item $s$ sites remove all the points covered by balls centered at points in $Y_j$, and proceed to the next round $j+1$.
\end{enumerate}
It is easy to see that the computation will finish in $r = O(\log n)$ rounds since at each round we remove a constant fraction of points.  At the end we compute an $O(1)$-approximation of $k$-median or $k$-means on the $O(k \log n)$ points $\bigcup_{j \in [r]} Y_j$.  In \cite{MP04} it has been shown that this algorithm gives an $O(1)$-approximation to $k$-median or $k$-means with high probability.

We now describe how to implement this centralized algorithm in the blackboard model.  We first consider each round.  Step 1 can be done by the distributed sampling algorithm in \cite{CMYZ12} using $\tilde{O}(k+s)$ bits of communication; note that at the end of this step the sampled points in $Y_j$ are written on the blackboard.
Step $2$ can be done by a binary search for the minimum ball radius $t_j$ such that $\bigcup_{p \in Y_j} \mathtt{Ball}(p, t_j)$ covers at least a $0.9$ fraction of points in  $\bigcup_{i \in [s]} X_i$, where $\mathtt{Ball}(p, t_j)$ denotes the ball centered at $p$ with radius $t_j$;  this binary search can be done using $\tilde{O}(1)$ bits of communication.  Step $3$ and $4$ can be done locally without any communication.   After $r$ rounds, the final clustering step can be done by any of the $s$ sites since all points in $\bigcup_{j \in [r]} Y_j$ have already been written on the blackboard.
Therefore the total communication cost can be bounded by $\tilde{O}(k+s)$.

Finally, we would like to mention that $\Omega(k + s)$ is an obvious lower bound, and thus our upper bound is tight up to logarithmic factors.  To see this, notice that $k$ is the size of the output, and the coordinator has to communication with each of the $s$ sites for at least $1$ bit.

\newcommand{\twomoons}{{\tt Twomoons}}
\newcommand{\gauss}{{\tt Gauss}}
\newcommand{\sculpture}{{\tt Sculpture}}
\newcommand{\baseline}{{\tt Baseline}}
\newcommand{\MM}{{\tt MsgPassing}}
\newcommand{\blackboard}{{\tt Blackboard}}
\newcommand{\ncut}{\text{ncut}}
\newcommand{\chensays}[2][]{\textcolor{blue} {\textsc{Jiecao #1:} \emph{#2}}}

\section{Experiments}
In this section we present experimental results for  graph clustering in the message passing and blackboard models. We will compare the following three algorithms. (1) \baseline: each site sends all the data to the coordinator directly; (2) \MM: our algorithm in the message passing model (Section~\ref{sec:gcmessage}); (3) 
\blackboard: our algorithm in  the blackboard model (Section~\ref{sec:bb}).

%Since both of our algorithms are crucially based on the use of spectral scarification, our main focus in the experiments is to investigate to what extend the quality of the spectral clustering algorithms will be affected by using spectral sparsification, the saving of communication costs by using spectral sparsificaion, ...
%
%
%The goal of this experiment is not to demonstrate the effectiveness of the spectral clustering algorithm. We mainly want to investigate the following, 
%\begin{itemize}
%\item to what extend the quality of clustered results will be affected by using spectral sparsification.
%\item saving of communication costs by using spectral sparsifier.
%\item the affect of constants in algorithms of the message passing/blackboard model.
%\end{itemize}
%
%
%\subsection{The Setup}
%\paragraph{Reference Algorithms}
%We compare different algorithms in our experiment.

%Note that we can also run \MM~ in the blackboard model.

Besides giving the visualized results of these algorithms on various datasets, we also measure the qualities of the results via the {\em normalized cut}, defined as 
\[
\ncut(A_1, \ldots, A_{k}) = \frac{1}{2}\sum_{i\in[k]}\frac{w(A_i, V\backslash A_i)}{\vol(A_i)},
\]
 which is a standard objective function to be minimized for spectral clustering algorithms. 
%We will compare the communication costs of these algorithms in different settings.

%We also compare the total communication costs of different algorithms/models. As the unit does not matter in our case, we normalize all communication costs by the cost of \baseline.  Whenever possible, we will visualize the clustered results.

We implemented the algorithms using multiple languages, including Matlab, Python and C++. Our experiments were conducted on an IBM NeXtScale nx360 M4 server, which is equipped with 2 Intel Xeon E5-2652 v2 8-core processors, 32GB RAM and 250GB local storage.

\subsection{Datasets.}
We test the algorithms in the following real and synthetic datasets, which is visualized in \figref{visualization}.

\begin{figure}[h]
     \centering
     \subfigure[\twomoons]{\includegraphics[width=0.23\textwidth]{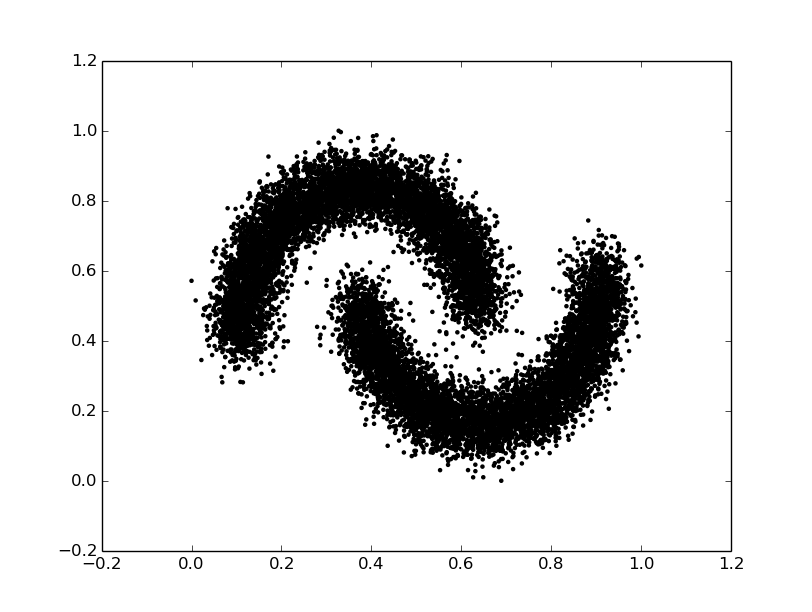}\label{fig:twomoons}}
     ~~
     \subfigure[\gauss]{\includegraphics[width=0.23\textwidth]{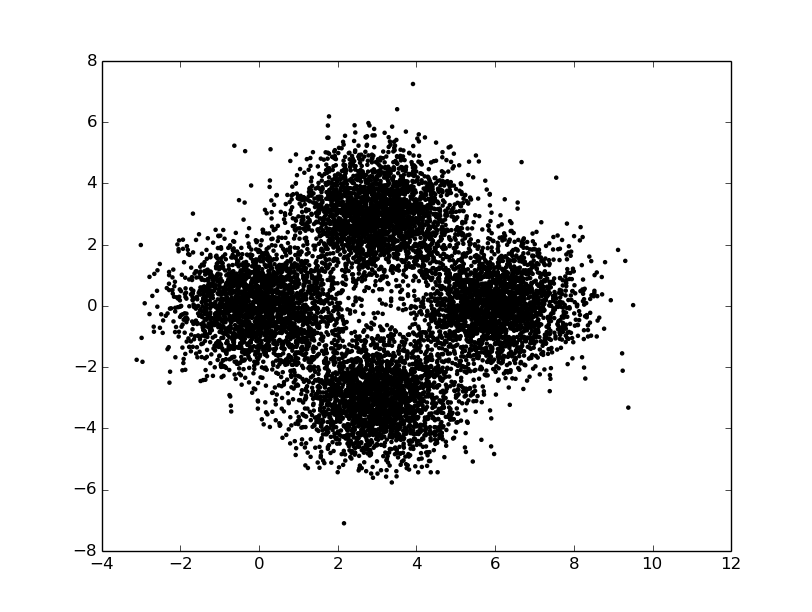}\label{fig:gauss}}
     ~~
     \subfigure[\sculpture]{\includegraphics[width=0.13\textwidth,height=0.16\textwidth]{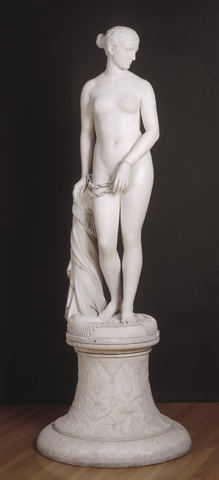}\label{fig:sculpture}}
     \caption{Visualization of the datasets for our experiments.}
     \label{fig:visualization}
\end{figure}

\vspace{-1mm}
\begin{itemize}
\item \twomoons : this dataset contains $n=14,000$ coordinates in $\mathbb{R}^2$. We consider each point to be a vertex. For any two vertices $u, v$, we add an edge with weight $w(u,v) = \exp\{-\|u-v\|_2^2/\sigma^2\}$ with $\sigma = 0.1$ when one vertex is among the $7000$-nearest points of the other.  This construction results in a graph with about $110,000,000$ edges.

\item  \gauss : this dataset contains $n = 10,000$ points in $\mathbb{R}^2$. There are $4$ clusters in this dataset, each generated using a Gaussian distribution. We construct a complete graph as the similarity graph.  For any two vertices $u, v$, we define the weight $w(u,v) = \exp\{-\|u-v\|_2^2/\sigma^2\}$ with $\sigma = 1$. The resulting graph has about $100,000,000$ edges.

\item \sculpture : a photo of \textit{The Greek Slave}~\footnote{Available in e.g., \url{http://artgallery.yale.edu/collections/objects/14794}}. We use an $80\times 150$ version of this photo where each pixel is viewed as a vertex. To construct a similarity graph, we map each pixel to a point in $\mathbb{R}^5$, i.e., $(x, y, r, g, b)$, where the latter three coordinates are the RGB values. For any two vertices $u, v$, we  put an edge between $u, v$ with weight $w(u,v) = \exp\{-\|u-v\|_2^2/\sigma^2\}$ with $\sigma = 0.5$ if one of $u, v$ is among the $5000$-nearest points of the other. This results in a graph with about $70,000,000$ edges.
\end{itemize}
\vspace{-1mm}
In the distributed model edges are randomly partitioned across $s$ sites. 

%\vspace{-1.5mm}

\subsection{Results on clustering quality}
%{\em Quality.} \
\begin{figure*}[ht]
     \centering
     \subfigure[\baseline]{\includegraphics[width=0.2\textwidth]{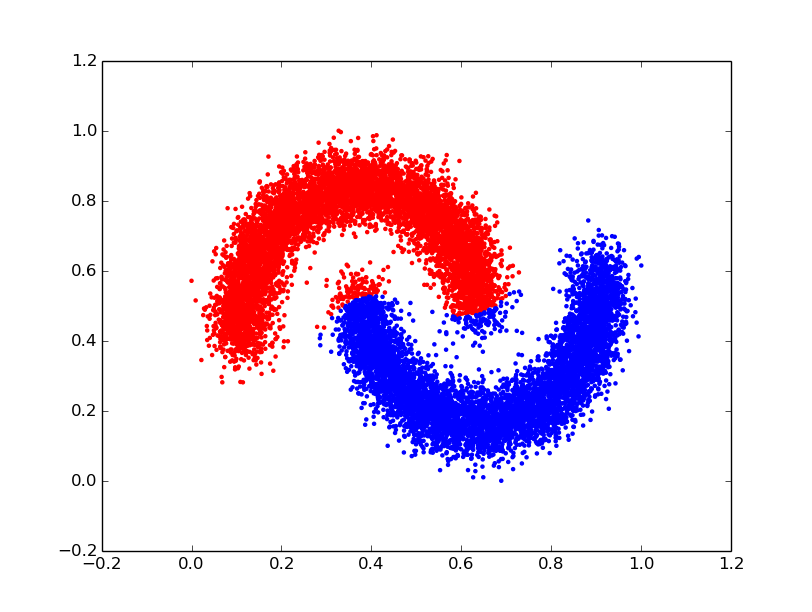}\label{fig:twomoons-clustered-original}}
     \subfigure[\MM]{\includegraphics[width=0.2\textwidth]{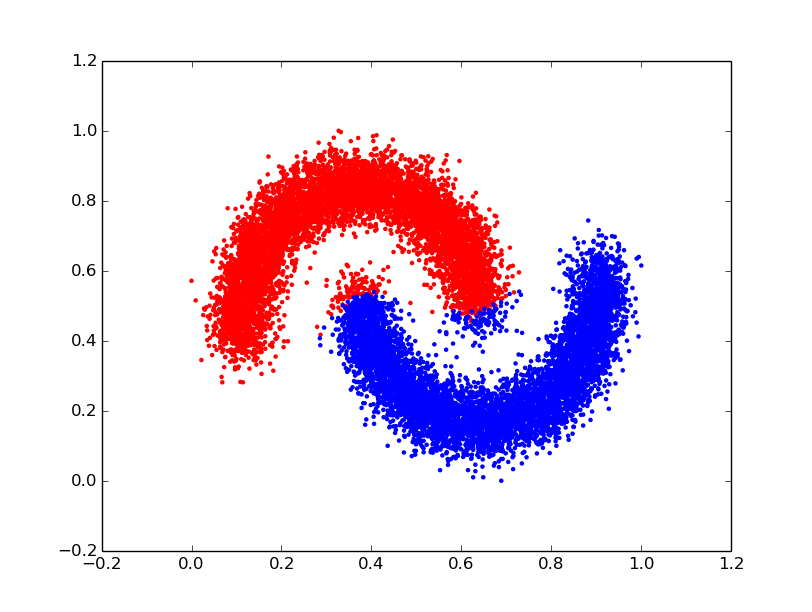}\label{fig:twomoons-clustered-sparsify}}
     \subfigure[\blackboard]{\includegraphics[width=0.2\textwidth]{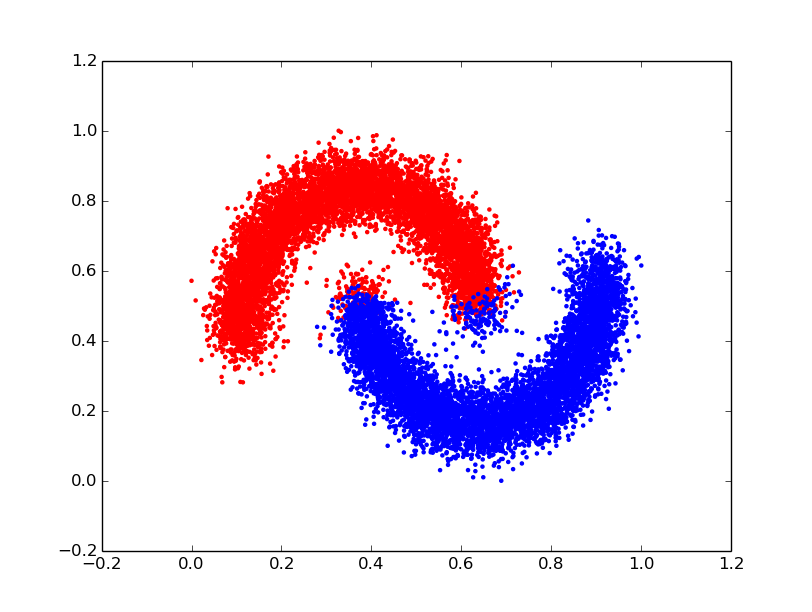}\label{fig:twomoons-clustered-chain}}
     \caption*{\twomoons, $k = 2$;}

\subfigure[\baseline]{\includegraphics[width=0.2\textwidth]{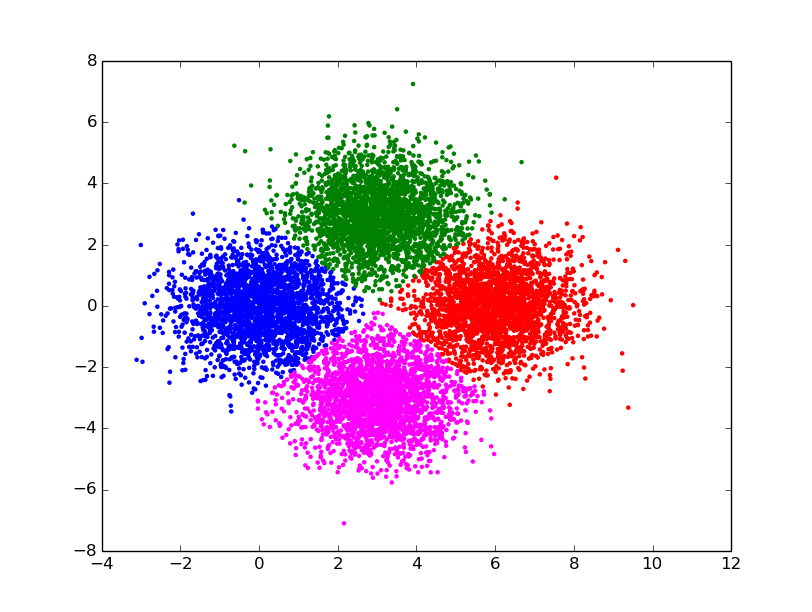}\label{fig:gauss-clustered-original}}
     \subfigure[\MM]{\includegraphics[width=0.2\textwidth]{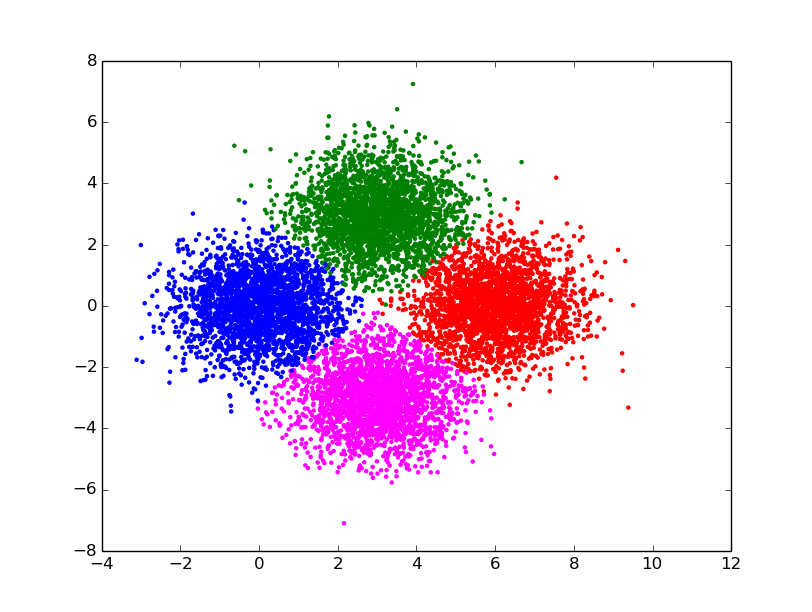}\label{fig:gauss-clustered-sparsify}}
     \subfigure[\blackboard]{\includegraphics[width=0.2\textwidth]{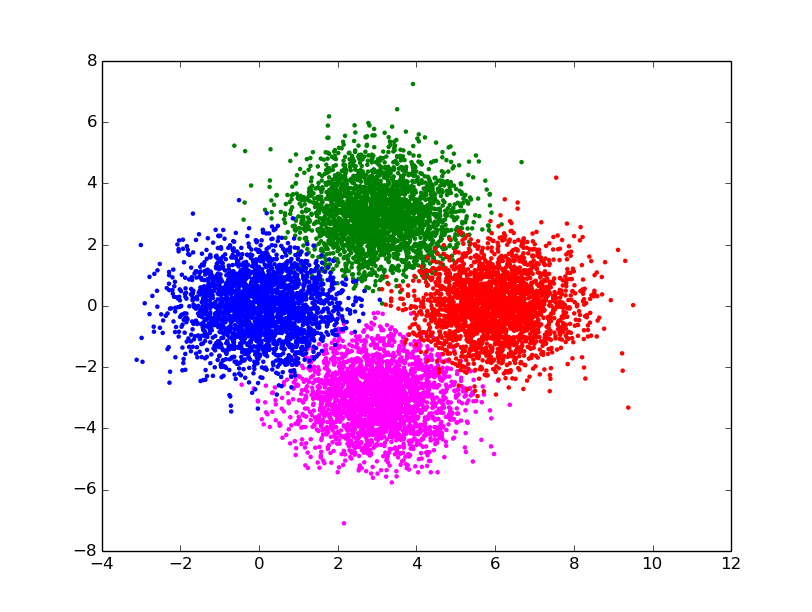}\label{fig:gauss-clustered-chain}}
     \caption*{\gauss, $k = 4$}

     \subfigure[\baseline]{\includegraphics[width=0.2\textwidth,height=0.2\textwidth]{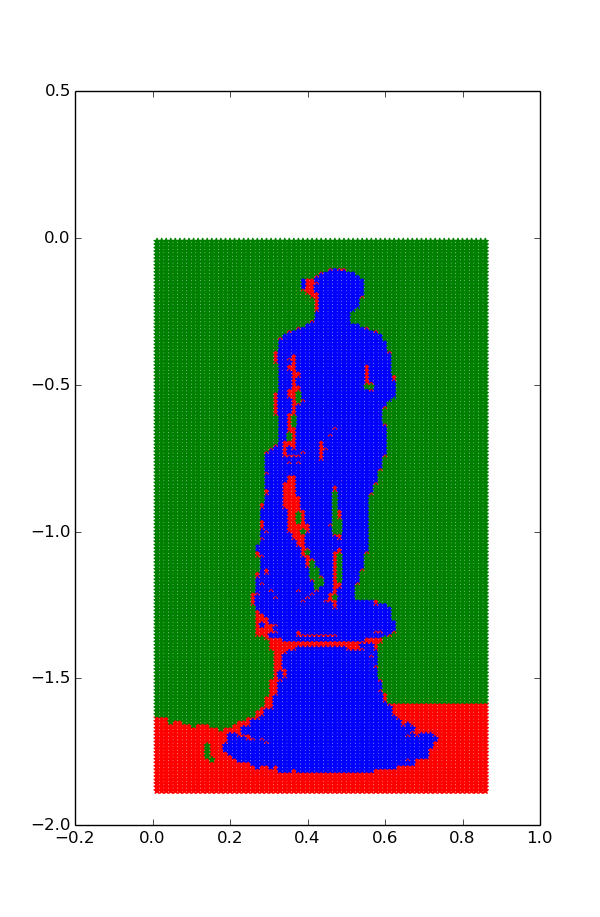}\label{fig:sculpture-clustered-original}}  
     \subfigure[\MM]{\includegraphics[width=0.2\textwidth,height=0.2\textwidth]{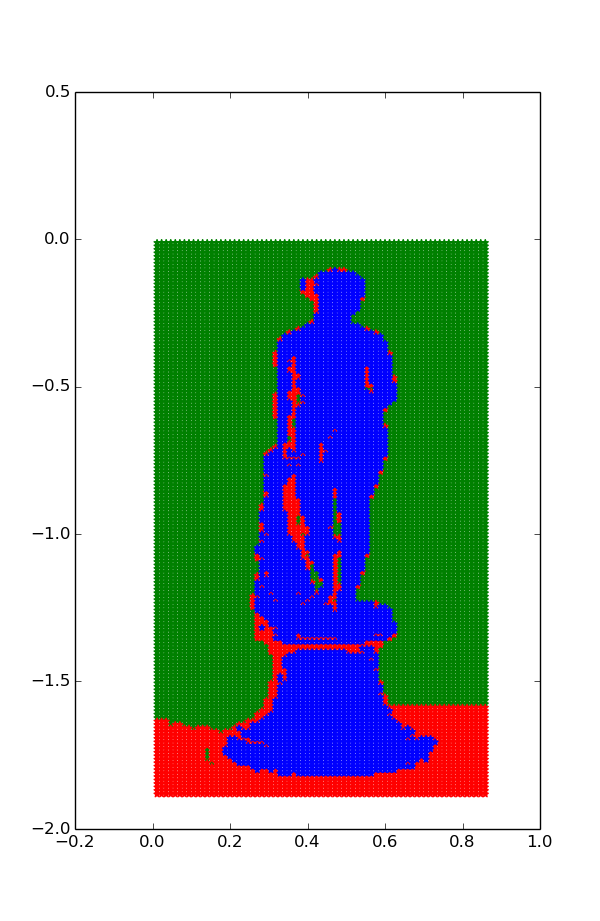}\label{fig:sculpture-clustered-sparsify}}
     \subfigure[\blackboard]{\includegraphics[width=0.2\textwidth,height=0.2\textwidth]{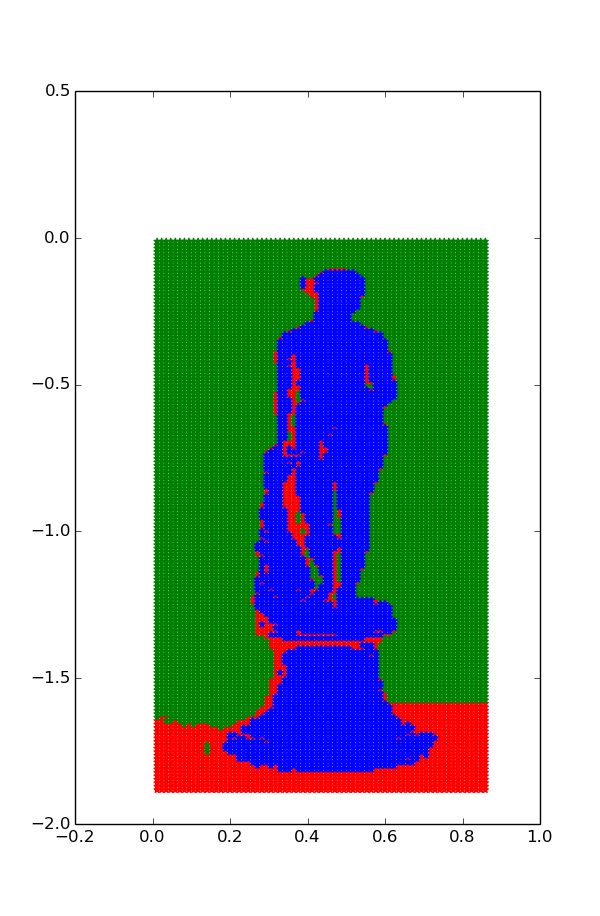}\label{fig:sculpture-clustered-chain}}
     \caption*{\sculpture, $k = 3$. }

     \caption{Visualization of the results on \twomoons, \gauss\ and \sculpture. In the message passing model each site samples $5 n$ edges; in the blackboard model all sites jointly sample $10n$ edges (in \twomoons~ and \gauss) or $20n$ edges (in \sculpture) and the chain has length $18$. $s = 15$.}
     \label{fig:quality-1}
\end{figure*}

We visualize the clustered results for 
the \twomoons, \gauss\ and \sculpture\ in Figure~\ref{fig:quality-1}.
% and visualize the clustered results for \gauss\ and \sculpture in Figure~\ref{fig:quality-2}.
It can be seen that \baseline, \MM\ and \blackboard\ give results of very similar qualities.  For simplicity, here we only present the visualization for $s=15$. Similar results were observed when we varied the values of $s$.  
%\he{To Qin: Do you plan to have two titles (Results \& Quality)?}

% \begin{figure*}[h]
%      \centering
% \subfigure[\baseline]{\includegraphics[width=0.3\textwidth]{gauss-10000-original-clustered.png}\label{fig:gauss-clustered-original}}
%      \subfigure[\MM]{\includegraphics[width=0.3\textwidth]{gauss-10000-sparsify-clustered-15.png}\label{fig:gauss-clustered-sparsify}}
%      \subfigure[\blackboard]{\includegraphics[width=0.3\textwidth]{gauss-10000-chain-clustered.png}\label{fig:gauss-clustered-chain}}
%      \caption*{\gauss, $k = 4$}

%      \subfigure[\baseline]{\includegraphics[width=0.2\textwidth]{sculpture-11680-original-clustered.png}\label{fig:sculpture-clustered-original}}  
%      \subfigure[\MM]{\includegraphics[width=0.2\textwidth]{sculpture-11680-sparsify-clustered-15.png}\label{fig:sculpture-clustered-sparsify}}
%      \subfigure[\blackboard]{\includegraphics[width=0.2\textwidth]{sculpture-11680-chain-clustered.png}\label{fig:sculpture-clustered-chain}}
%      \caption*{\sculpture, $k = 3$. }

%      \caption{Visualization of results on \gauss\ and \sculpture; in the message passing model each site samples $5 n$ edges; in the blackboard model all sites jointly sample $10n$ (in \gauss) or $20n$ (in \sculpture) edges and the chain has length $18$.}
%      \label{fig:quality-2}
% \end{figure*}

We also compare the normalized cut (ncut) values of the clustering results of different algorithms.  The results are presented in Figure \ref{fig:quality}. In all datasets, the ncut values of different algorithms are very close. The ncut value of \MM\ slightly decreases when we increase the value of $s$, while the ncut value of \blackboard\ is independent of $s$.
%We comment that in general, it is difficult to compare \MM\ and \blackboard\ directly because they are affected by different parameters.

\begin{figure*}[!ht]
  \centering
  \subfigure[\twomoons]{\includegraphics[width=0.33\textwidth]{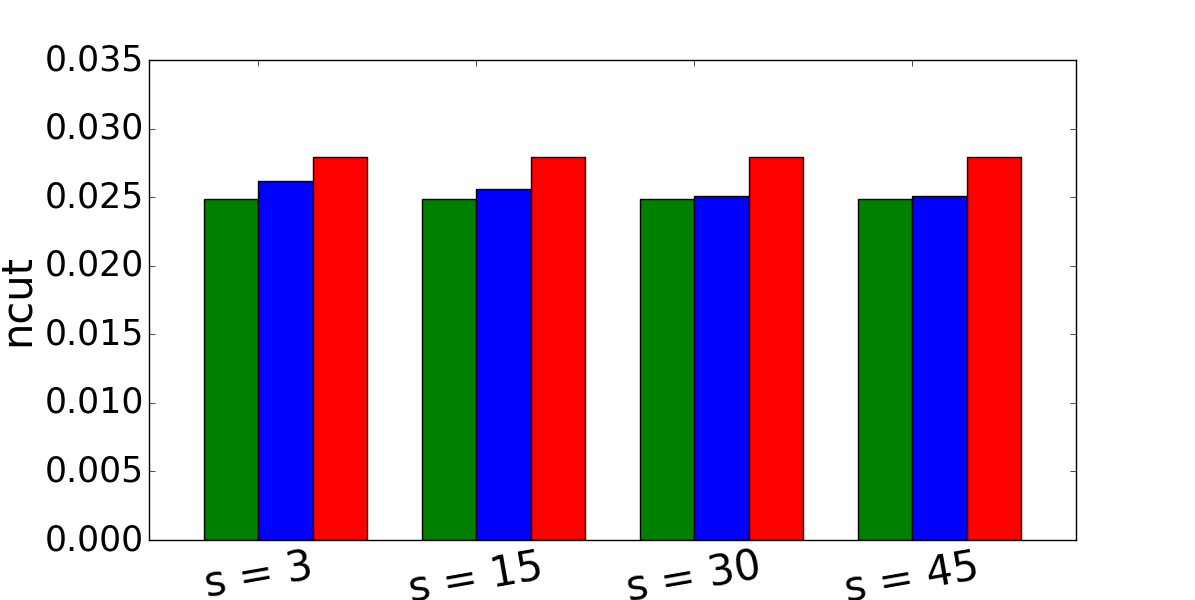}\label{fig:twomoons-quality}}\hspace*{-1.1em}
  \subfigure[\gauss]{\includegraphics[width=0.31\textwidth]{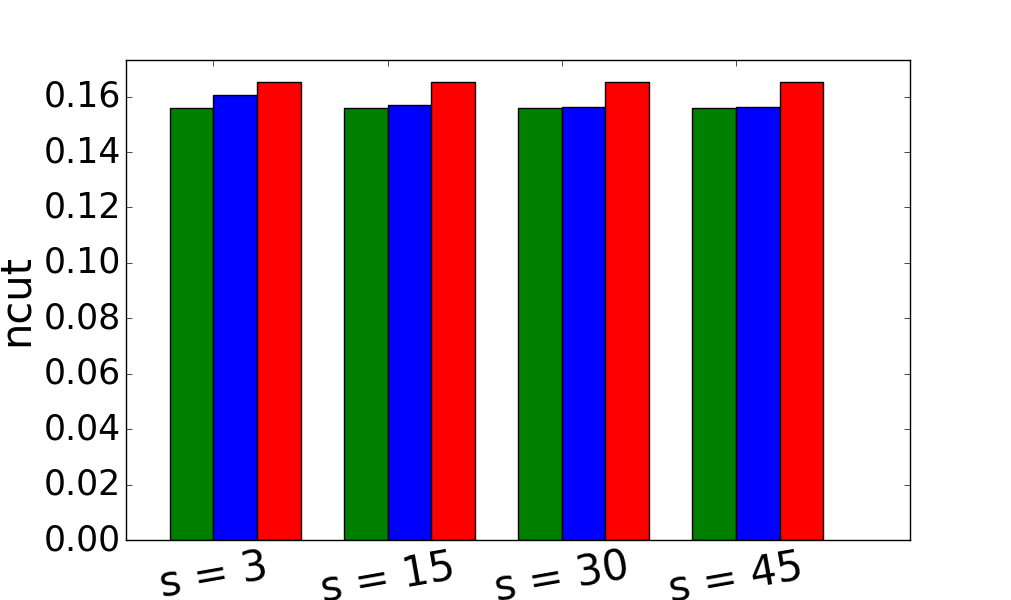}\label{fig:gauss-quality}}\hspace*{-1.1em}
  \subfigure[\sculpture]{\includegraphics[width=0.31\textwidth]{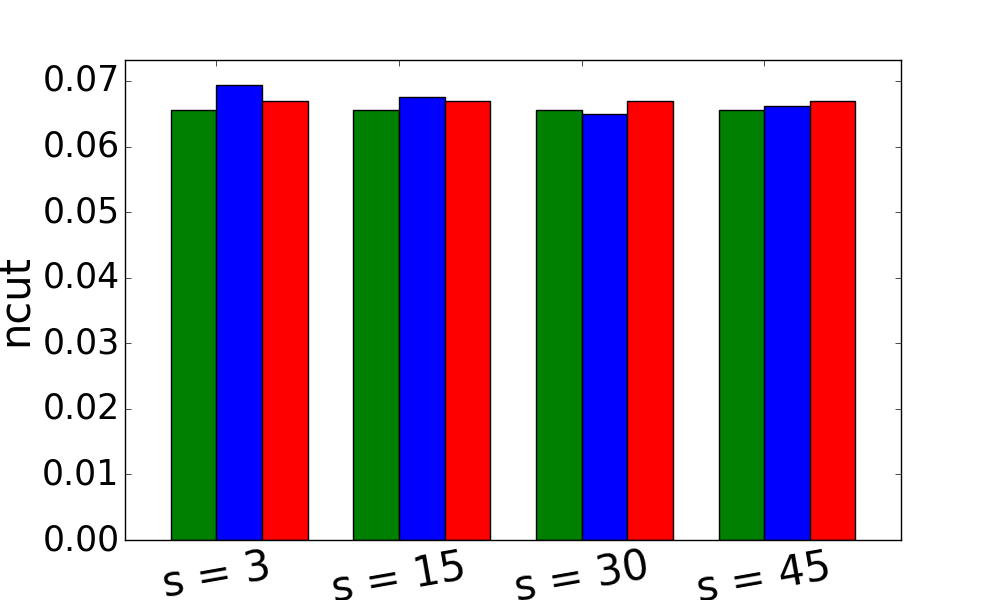}\label{fig:sculpture-quality}}\hspace*{-1.1em}
  \subfigure{\includegraphics[width=0.14\textwidth]{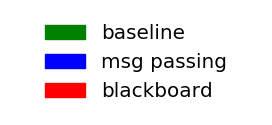}}
     \caption{Comparisons on normalized cuts. In the message passing model, each site samples $5n$ edges; in each round of the algorithm in the blackboard model, all sites jointly sample $10n$ edges (in \twomoons~and \gauss) or $20n$ edges (in \sculpture) edges and the chain has length $18$.}
     \label{fig:quality}
\end{figure*}

%\textcolor{red}{To Jiecao: Can you put the color lines indicating baseline, message passing, and blackboard within one row in Pic 2? Withthis we can save some space.}

%\vspace{-1.5mm}

\subsection{Results on communication costs} 
\begin{figure*}[!ht]
     \centering
     \subfigure[\twomoons]{\includegraphics[width=0.3\textwidth]{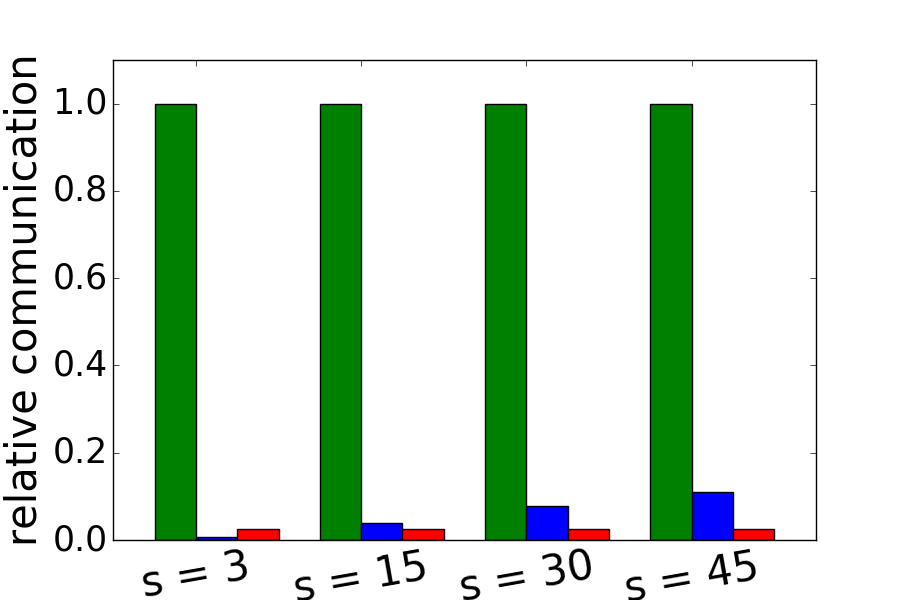}\label{fig:twomoons-communication}}
     \subfigure[\gauss]{\includegraphics[width=0.3\textwidth]{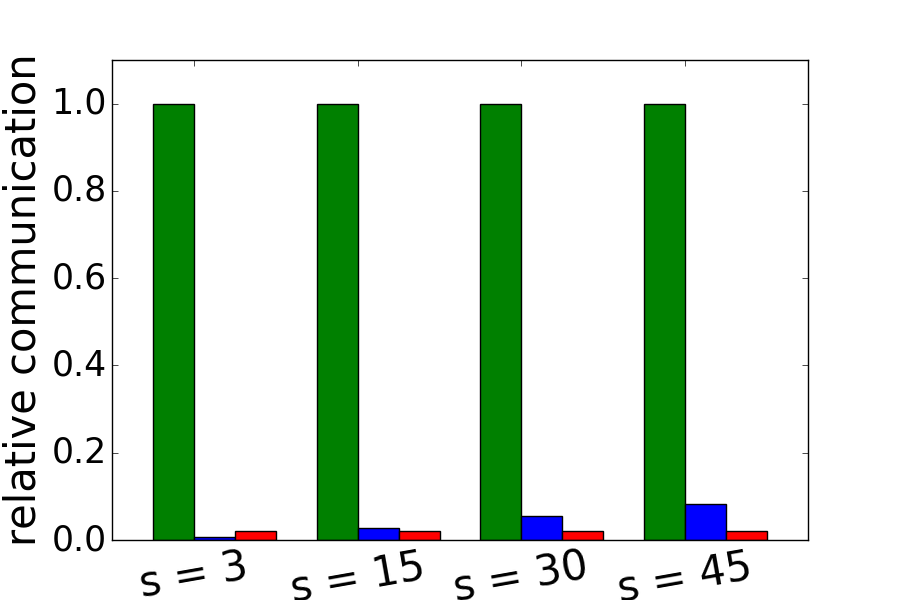}\label{fig:gauss-communication}}
     \subfigure[\sculpture]{\includegraphics[width=0.3\textwidth]{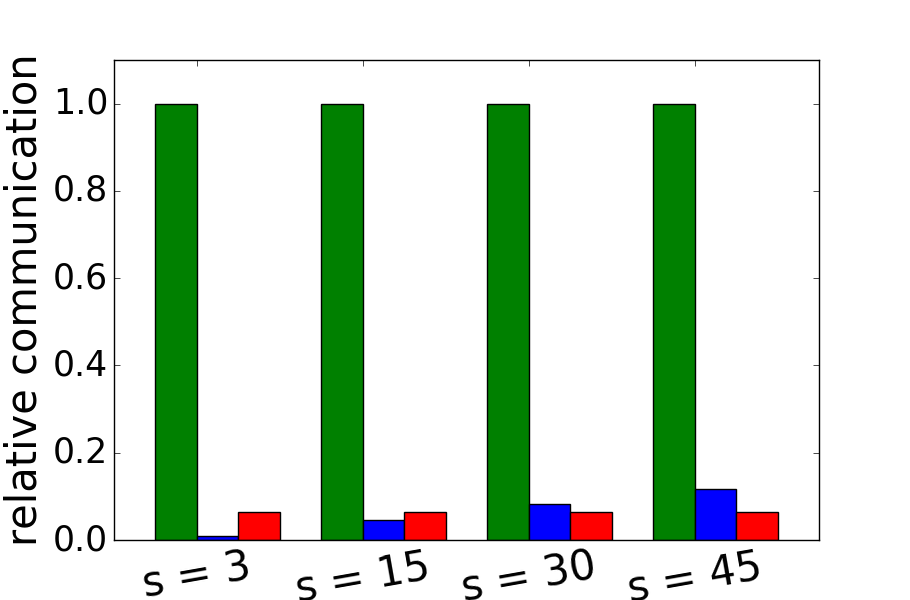}\label{fig:sculpture-communication}}

     \subfigure[\twomoons]{\includegraphics[width=0.32\textwidth]{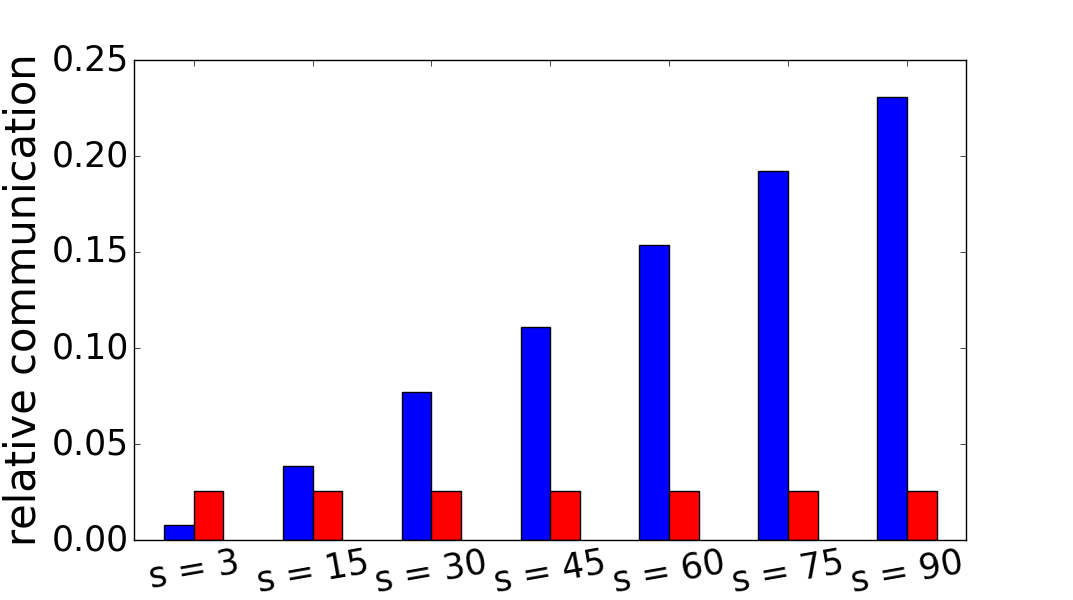}\label{fig:twomoons-communication-2}}
     \subfigure[\gauss]{\includegraphics[width=0.32\textwidth]{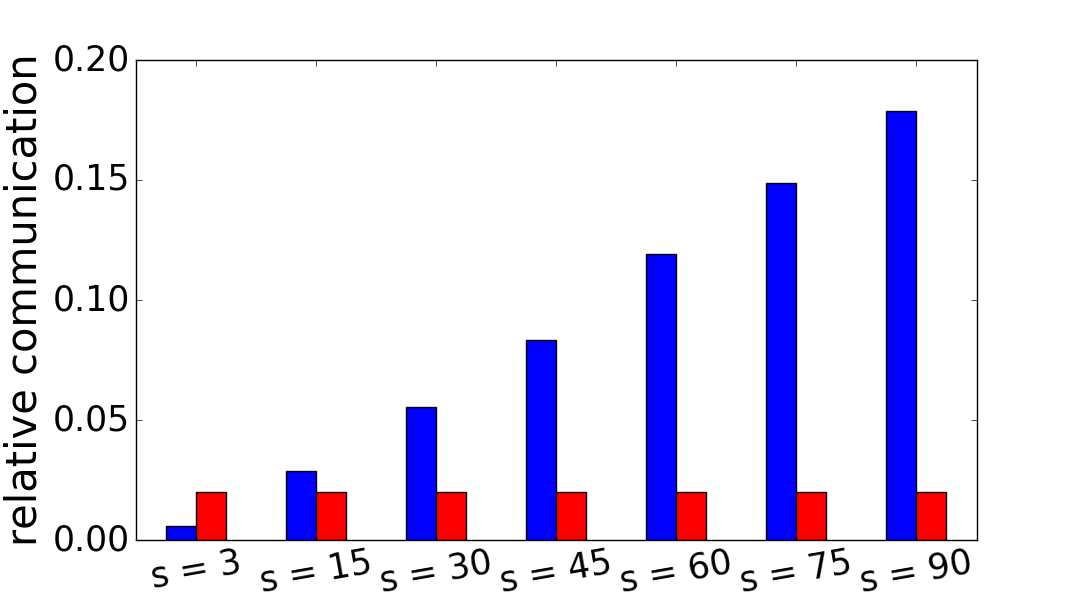}\label{fig:gauss-communication-2}}
     \subfigure[\sculpture]{\includegraphics[width=0.32\textwidth]{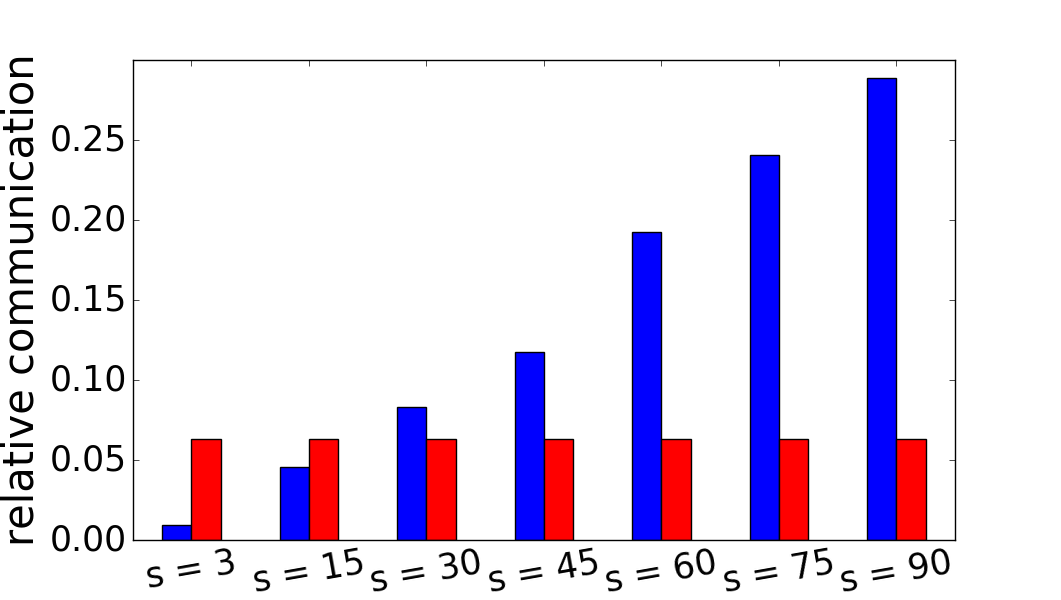}\label{fig:sculpture-communication-2}}
     \caption{Comparisons on communication costs. In the message passing model, each site samples $5n$ edges; in each round of the algorithm in the blackboard model, all sites jointly sample $10n$ (in \twomoons~and \gauss) or $20n$ (in \sculpture) edges and the chain has length $18$. }
     \label{fig:communication}
\end{figure*}

We compare the communication costs of different algorithms in Figure \ref{fig:communication}. We observe that while achieving similar clustering qualities as \baseline, both \MM\ and \blackboard\ are significantly more communication-efficient (by one or two orders of magnitudes in our experiments). We also notice that the value of $s$ does not affect the communication cost of \blackboard, while the communication cost of \MM\ grows almost linearly with $s$; when $s$ is large, \MM\ uses significantly more communication than \blackboard. These confirm our theory.  %In Figure~\ref{fig:mm-const} and Figure~\ref{fig:blackboard-const}   in Appendix~\ref{sec:parameters} we present how the performance of \MM\ and \blackboard\ are affected by their parameters.

%
%
%\vspace{-1.5mm}
%\paragraph{Summary.}  From our experimental results we conclude that \MM\ and \blackboard\ achieve similar clustering quality as the native algorithm \baseline, while significantly reduce the communication cost.  When the number of sites is large, \blackboard\ is more communication efficient than \MM, as predicted by our theory.

\subsection{Parameters in \MM\ and \blackboard}
\label{sec:parameters}

Figure \ref{fig:mm-const} shows in \MM how the value of ncut is affected by the number of sites and the number of edges sampled in each site. 
Here, each site samples $cn$ edges. 
When $c=3$ and $s=1$, the ncut value diverges in all datasets. This is because with such a small $c$, the algorithm does not generate a valid sparsifier. In general, increasing $c$ or $s$ will slightly decrease the ncut value. But once they are above some thresholds, the ncut values of \MM\ and \baseline\ become very close.

Figure \ref{fig:blackboard-const} shows in \blackboard  how the ncut value is affected by the number of iterations and the number of edges sampled. When the number of iterations is set to be $5$, ncut values diverge in all datasets. This is because we cannot expect to generate a valid sparsifier by using such few iterations. It can be seen from \ref{fig:bb-gauss-constant} that for a fixed $c$, performing more iterations will help to reduce ncut values. From the same figure, one can also conclude that for fixed iterations, increasing $c$ also helps to reduce the ncut values.

\begin{figure*}[h!t]
     \centering
     \subfigure[\twomoons]{\includegraphics[width=0.3\textwidth]{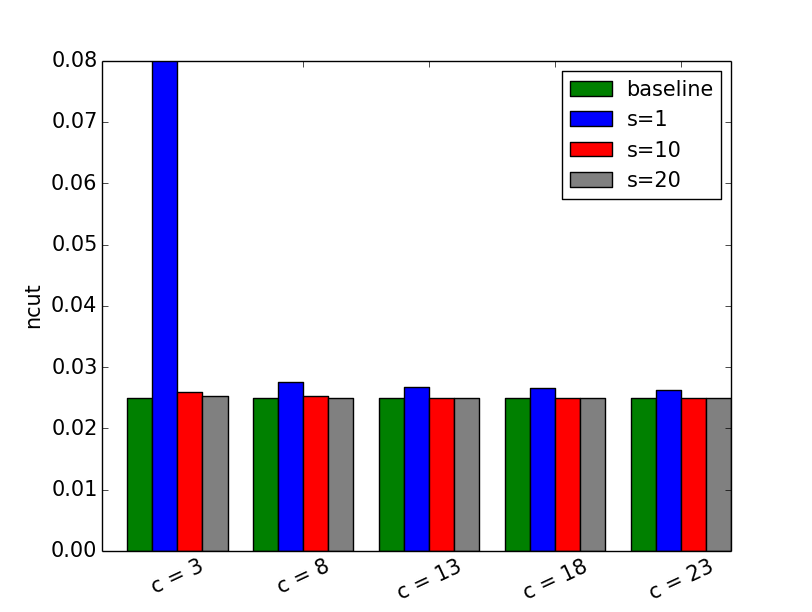}\label{fig:mm-twomoons-constant}}
     \subfigure[\gauss~dataset]{\includegraphics[width=0.3\textwidth]{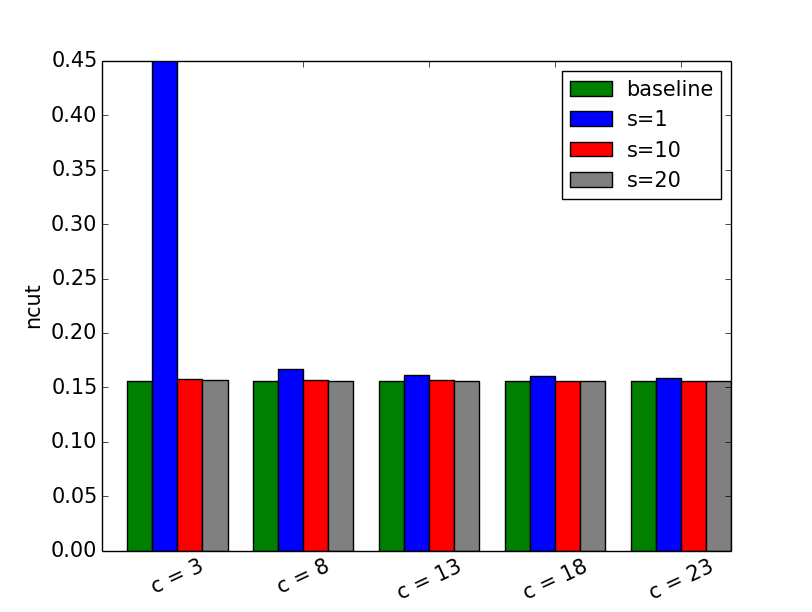}\label{fig:mm-gauss-constant}}
     \subfigure[\sculpture]{\includegraphics[width=0.3\textwidth]{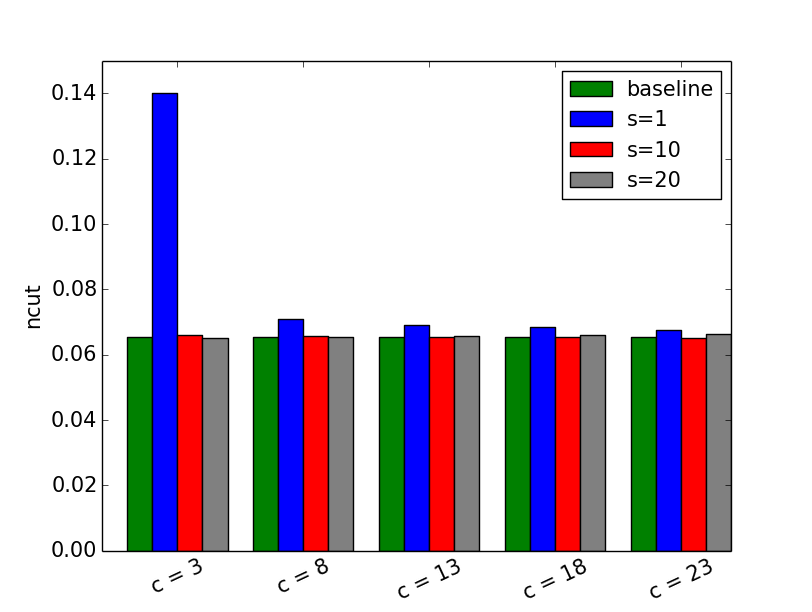}\label{fig:mm-sculpture-constant}}
     \caption{The pictures above show the $\ncut$ values with respect to the values of $c$ and $s$ for the \MM\ algorithm. Here  
 each site samples $c n$ edges.}
     \label{fig:mm-const}
\end{figure*}

\begin{figure*}[h!t]
     \centering
     \subfigure[\twomoons]{\includegraphics[width=0.3\textwidth]{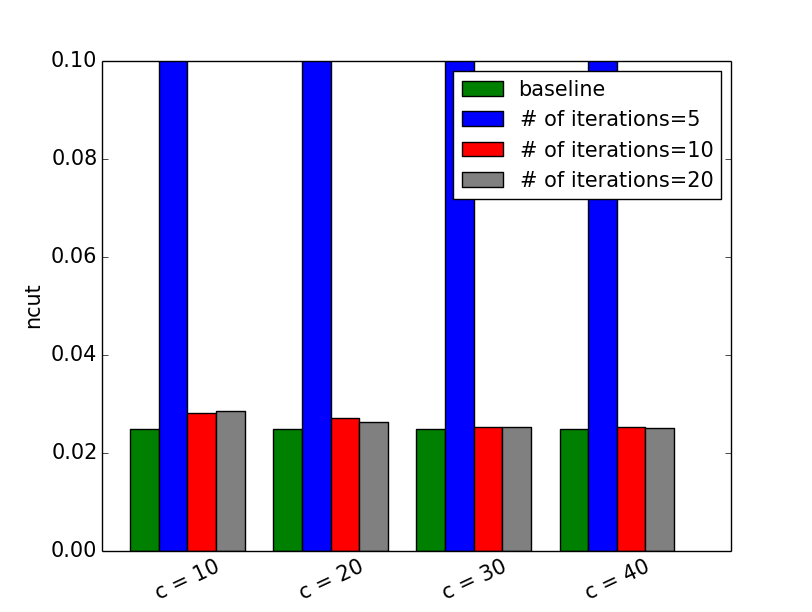}\label{fig:bb-twomoons-constant}}
     \subfigure[\gauss]{\includegraphics[width=0.3\textwidth]{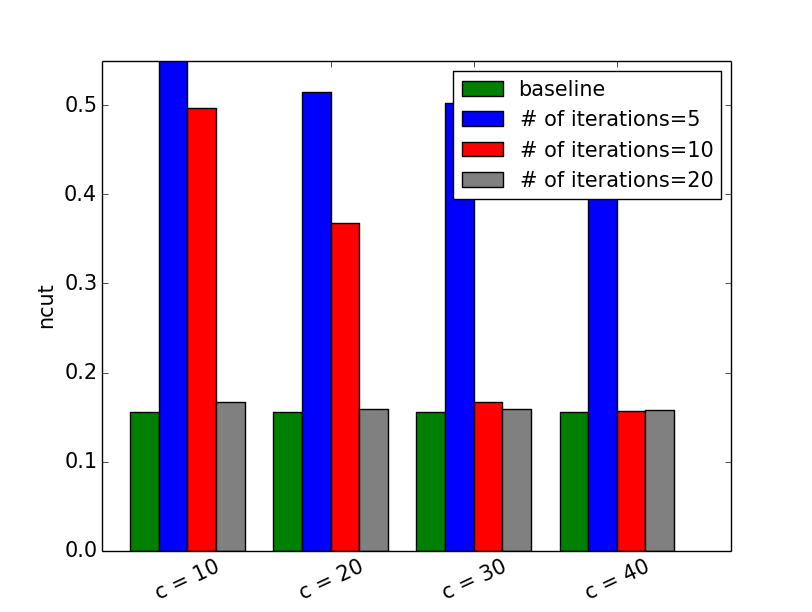}\label{fig:bb-gauss-constant}}
     \subfigure[\sculpture]{\includegraphics[width=0.3\textwidth]{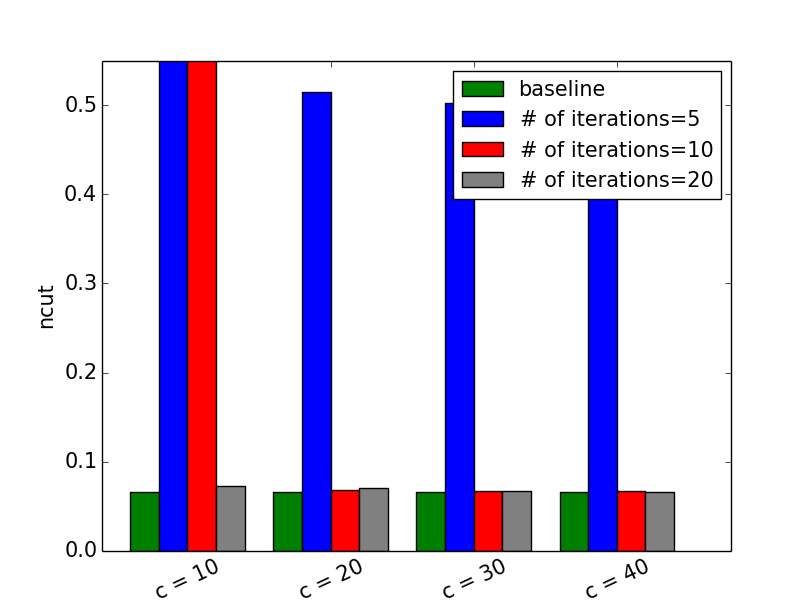}\label{fig:bb-sculpture-constant}}
     \caption{The pictures above show how the $\ncut$ values are affected by the number of iterations and the value of $c$ for the \blackboard\ algorithm. Here 
all sites jointly sample $c n$ edges. }
     \label{fig:blackboard-const}
\end{figure*}

%{\bf Acknowledgement: }D.W. thanks support from the
%     XDATA program of the Defense Advanced
%     Research Projects Agency (DARPA),
%     Air Force Research Laboratory
%     contract FA8750-12-C-0323.

\bibliography{reference,qin}
\bibliographystyle{plain}

%\appendix

%\input{appendix}

\end{document}